\newcommand\TITLE{A Two-Phase Algorithm for Bin Stretching with
  Stretching Factor $\mathbf{1.5}$}
\newcommand\binstretch{{\sc Online Bin Stretching}\xspace}
\newcommand\binpacking{{\sc Bin Packing}\xspace}
\newcommand\eps\varepsilon
\newcommand\calA{{\mathcal A}}
\newcommand\Cprime{{\mathcal C'}}
\newcommand\calB{{\mathcal B}}
\newcommand\calC{{\mathcal C}}
\newcommand\calD{{\mathcal D}}
\newcommand\calE{{\mathcal E}}
\newcommand\calF{{\mathcal F}}
\newcommand\calH{{\mathcal H}}
\newcommand\calL{{\mathcal L}}
\newcommand\calR{{\mathcal R}}
\newcommand\Aprime{{\mathcal A'}}
\newcommand{\hmodfour}{{h}}
\newcommand{\Rbar}{R_{\mbox{\rm\scriptsize first}}}
\newcommand\mycase[1]{\vspace{0.4ex}\noindent{\bf #1}}
\newcommand\mycasesp[1]{\vspace{1ex}\noindent{\bf #1}}
\newcommand*\samethanks[1][\value{footnote}]{\footnotemark[#1]}
\def\indentskip{\hskip 1.5em}
\def\indentskiptwodigit{\hskip 1em}
\newcommand\algobox[1]{\begin{center}
\fbox{\parbox[c]{0.95\textwidth}{#1}}
\end{center}
}
\begin{document}

\title{\TITLE}

\author{Martin B\"{o}hm\inst{1}\fnmsep\thanks{
Supported by the project
14-10003S of GA \v{C}R and by the GAUK project 548214.
},
Ji\v{r}\'{\i} Sgall\inst{1}\fnmsep\samethanks,
Rob van Stee\inst{2}
\and
Pavel Vesel\'y\inst{1}\fnmsep\samethanks
}

\institute{
Computer Science Institute of Charles University,
Prague, Czech Republic.\newline
\email{\{bohm,sgall,vesely\}@iuuk.mff.cuni.cz}.
\and
Department of Computer Science,
University of Leicester, Leicester, UK.\newline
\email{rob.vanstee@leicester.ac.uk}. 
}

\maketitle

\begin{abstract}

\binstretch is a semi-online variant of bin packing in which the
algorithm has to use the same number of bins as an optimal packing,
but is allowed to slightly overpack the bins. The goal is to minimize
the amount of overpacking, i.e., the maximum size packed into any bin.

We give an algorithm for \binstretch with a stretching factor of $1.5$
for any number of bins. We build on previous algorithms and use a
two-phase approach. 
However, our analysis is technically more complicated and uses
amortization over the bins with the help of two weight functions.


\end{abstract}

\section{Introduction}

The most famous algorithmic problem dealing with online assignment is
arguably {\sc Online Bin Packing}.  In this problem, known since the
1970s, items of size between $0$ and $1$ arrive in a
sequence and the goal is to pack these items into the least number of
unit-sized bins, packing each item as soon as it arrives.

\binstretch, which has been introduced by Azar and Regev in
1998~\cite{azar98,azar01}, deals with a similar online
scenario. Again, items of size between $0$ and $1$ arrive in a
sequence, and the algorithm needs to pack each item as soon as it
arrives, but there are the following differences: (i) The packing
algorithm knows $m$, the number of bins that an optimal offline
algorithm would use, and must also use only at most $m$ bins, and (ii)
the packing algorithm can use bins of capacity $R$ for some $R\geq
1$. The goal is to minimize the stretching factor $R$.

In general, the term ``semi-online'' refers to algorithms that,
compared to online algorithms, have some additional global information
about the instance of the problem or have another advantage. We have
formulated \binstretch as a semi-online bin packing variant, where the
algorithm has the additional information that the optimal number of
bins is $m$ and at the same time the algorithm has the advantage of
using bins of larger capacity.

Taking another view, \binstretch can also be thought of as a
semi-online scheduling problem, in which we schedule jobs arriving one
by one in an online manner on exactly $m$ machines, with the objective
to minimize the makespan, i.e., the length of the resulting
schedule. Here the additional information is that the optimal offline
algorithm could schedule all jobs with makespan $1$. Our task is to
present an algorithm with makespan being at most $R$.

%
%

\smallskip
\noindent 
{\bf History.}  
\binstretch has been proposed by Azar and Regev~\cite{azar98,azar01}.
The original lower bound of $4/3$ for three bins has appeared even
before that, in~\cite{KeKoST97}, for two bins together with a matching
algorithm.  Azar and Regev extended the same lower bound to any number
of bins and gave an online algorithm with a stretching factor $1.625$.

The problem has been revisited recently, with both lower bound
improvements and new efficient algorithms. On the algorithmic side,
Kellerer and Kotov~\cite{kellerer2013} have achieved a stretching
factor $11/7 \approx 1.57$ and Gabay et al.~\cite{gabay2013} have
achieved $26/17 \approx 1.53$. There is still a considerable gap to
the best known general lower bound of $4/3$, shown by a simple argument in the
original paper of~Azar and Regev \cite{azar98,azar01}.

While the algorithms and the lower bound from the last paragraph work
for any $m\geq 2$, \binstretch has also been studied in the special
case of a specific constant number of bins $m$. In some cases better
algorithms and lower bounds are possible. Obviously, for $m=1$, the
problem is trivial, as the only possible algorithm has a stretching
factor $1$. For $m=2$, it is easy to achieve a stretching factor
$4/3$, which is optimal. Thus the first interesting case is $m=3$; the
currently best algorithm given by the authors of this paper
\cite{bsvsv15arxiv} has a stretching factor of $11/8 = 1.375$.

Interestingly, the setting with a small fixed number of bins allows
better lower bounds on the stretching factor.  The lower bounds cannot
be easily translated into a lower bound for a larger $m$; for example,
if we modify the instance by adding new bins and a corresponding
number of items of size 1 (that must use exactly the new bins in the
optimum), the semi-online algorithm still could use the additional
capacity of $R-1$ in the new bins to its advantage.  The paper of
Gabay et al.~\cite{gabay2013lbv2} showed a lower bound of
$19/14\approx 1.357$ for $m=3$ using a computer search. Extending
their methods, the authors of this paper were able to reach a lower
bound of $15/11 = 1.\overline{36}$ for $m=3$ in~\cite{bsvsv15arxiv} as
well as a bound of $19/14$ for $m=4$ and $m=5$.  The preprint
\cite{gabay2013lbv2} was updated in 2015 \cite{gabay2013lbv3} to
include a lower bound of $19/14$ for $m=4$ bins. For $m>5$, the
general lower bound of $4/3$ is still the best currently known.

\smallskip\noindent
{\bf Our contributions.} We present a new algorithm for \binstretch
with a stretching factor of $1.5$.  We build on the two-phase approach
which appeared previously in~\cite{kellerer2013,gabay2013}. In this
approach, the first phase tries to fill some bins close to $R-1$ and
achieve a fixed ratio between these bins and empty bins, while the
second phase uses the bins in blocks of fixed size and analyzes each
block separately. 
To reach $1.5$, we needed to significantly improve the analysis using
amortization techniques (represented by a weight function in our
presentation) to amortize among blocks and bins of different types.

A preliminary version of this work appeared in~\cite{bsvsv14}.

\smallskip\noindent
{\bf Related work.} The NP-hard problem \binpacking was originally
proposed by Ullman~\cite{ullman71} and Johnson~\cite{johnson73} in the
1970s. Since then it has seen major interest and progress, see the
survey of Coffman et al.~\cite{coffman13} for many results on
classical {\sc Bin Packing} and its variants. While our problem can be seen
as a variant of {\sc Bin Packing}, note that the algorithms cannot
open more bins than the optimum and thus general results for
\binpacking do not translate to our setting.

As noted, \binstretch can be formulated as online scheduling on
$m$ identical machines with known optimal makespan. Such algorithms
were studied and are important in designing constant-competitive
algorithms without the additional knowledge, e.g., for scheduling in
the more general model of uniformly related
machines~\cite{AsAFPW97,BeChKa00,EbJaSg09}.

For scheduling, also other types of semi-online algorithms are
studied. Historically first is the study of ordered sequences with
non-decreasing processing times~\cite{Graham69}. Most closely related
is the variant with known sum of all processing times studied
in~\cite{KeKoST97}. If we know the optimal makespan, we can always pad
the instance by small items at the end so that the optimal makespan
remains the same and the sum of processing time equals $m$ times the
optimal makespan. Thus one could expect that these two quantities are
interchangeable. However, when the sum of all processing times is
known, the currently best results are a lower bound of $1.585$ and an
algorithm with ratio $1.6$, both
from~\cite{DBLP:journals/tcs/AlbersH12}. This shows, somewhat
surprisingly, that knowing the actual optimum gives a significantly
bigger advantage to the semi-online algorithm over knowing just the
sum of the processing times. See Pruhs et al.~\cite{PST04} for a
survey of other results on (semi-)online scheduling.


\subsection{Definitions and notation} 

Our main problem, \binstretch, can be
described as follows: 

\noindent
{\bf Input:} an integer $m$ and a sequence
of items $I=i_1, i_2, \ldots$ given online one by one. Each item has
a {\it size} $s(i) \in [0,1]$ and must be packed immediately
and irrevocably.

\noindent
{\bf Parameter:} The {\it stretching factor} $R\geq 1$.

\noindent
{\bf Output:} Partitioning (packing) of $I$ into bins $B_1,\ldots,B_m$
so that $\sum_{i\in B_j}s(i)\leq R$ for all $j=1,\ldots,m$.

\noindent
{\bf Guarantee:} there exists a packing of all items in $I$ into $m$
bins of capacity $1$. 

\noindent
{\bf Goal:} Design an online algorithm with the stretching factor $R$
as small as possible which packs all input sequences satisfying the
guarantee.
\smallskip

For a bin $B$, we define the {\it size of the bin} $s(B) = \sum_{i \in
  B} s(i)$.  Unlike $s(i)$, $s(B)$ can change during the course of the
algorithm, as we pack more and more items into the bin. To easily
differentiate between items, bins and lists of bins, we use lowercase
letters for items ($i$, $b$, $x$), uppercase letters for bins and
other sets of items ($A$, $B$, $X$), and calligraphic letters for
lists of bins ($\calA$, $\calC$, $\calL$).

\section{Algorithm} 
\label{sec:bigm}

We rescale the sizes of items and the capacities of bins so that the
optimal bins have capacity 12 and the bins of the algorithm have
capacity 18.

We follow the general two-phase scheme of recent
results~\cite{kellerer2013,gabay2013} which we sketch now.  In the
first phase of the algorithm we try to fill the bins so that their
size is at most 6, as this leaves space for an arbitrary item in each
bin. Of course, if items larger than 6 arrive, we need to pack them
differently, preferably in bins of size at least 12
(since that is the size of the optimal bins). We
stop the first phase when the number of non-empty bins of size at most
6 is three times the number of empty bins. In the second phase, we
work in blocks consisting of three non-empty bins and one empty
bin. The goal is to show that we are able to fill the bins so that the
average size is at least 12, which guarantees we are able to pack the
total size of $12m$ which is the upper bound on the size of all items.

The limitation of the previous results using this scheme was that the
volume achieved in a typical block of four bins is slightly less than
four times the size of the optimal bin, which then leads to bounds
strictly above $3/2$. This is also the case in our algorithm: A
typical block may have three bins with items of size just above 4 from
the first phase plus one item of size 7 from the second phase, while
the last bin contains two items, each of size 7, from the second
phase---a total of 47 instead of desired $4 \cdot 12$. However, we
notice that such a block contains five items of size 7 which the
optimum cannot fit into four bins. To take an advantage of this, we
cannot analyze each block separately as
in~\cite{kellerer2013,gabay2013}.  Instead, the rough idea of our
improved method is to show that a bin with no item of size more than 6
typically has size at least 13 and amortize among the blocks of
different types. Technically this is done using a weight function $w$
that takes into account both the total size of items and the number of
items larger than 6. This is the main new technical idea of our proof.

There are other complications. We would like to ensure that a typical bin
of size at most 6 has size at least 4 after the first phase. However,
this is impossible to 
guarantee if the items packed there have size between 3 and 4. Larger
items are fine, as one per bin is sufficient, and the smaller ones are
fine as well as we can always fit at least two of them.
It is crucial to consider the items with sizes between 3 and 4 very 
carefully.
This motivates
our classification of items: Only the \emph{regular items} of size in
$(0,3]\cup(4,6]$ are packed in the bins filled up to size 6. The
\emph{medium items} of size in $(3,4]$ are packed in their own bins (four or
five per bin). Similarly, \emph{large items} of size in $(6,9]$ are
packed in pairs in their own bins. Finally, the \emph{huge items} of size
larger than $9$ are handled similarly as in the previous papers:
If possible, they are packed with the regular items, otherwise each in
their own bin.

The introduction of medium size items implies that we need to
revisit the analysis of the first phase and also of the case when the
first phase ends with no empty bin. These parts of the proof are
similar to the previous works, but due to the new item type we need to
carefully revisit it; it is now convenient to introduce another 
function $v$ that counts the items according to their type; we call it
a value, to avoid confusion with the main weight function $w$. The
analysis of the second phase when empty bins are present is more
complicated, as we need to take care of various degenerate cases, and
it is also here where the novel amortization is used.

\vspace{8pt}
Now we are ready to proceed with the formal definitions, statement of
the algorithm, and the proof our main result.

\begin{theorem}\label{thm:onepointfive}
There exists an algorithm for \binstretch with a stretching factor of
$1.5$ for an arbitrary number of bins.
\end{theorem}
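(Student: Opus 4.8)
The plan is to rescale so the optimal bins have capacity $12$ and the algorithm's bins have capacity $18$, as the excerpt already indicates, and then to describe an explicit two-phase algorithm together with an amortized volume argument. First I would fix the item classification: \emph{regular} items of size in $(0,3]\cup(4,6]$, \emph{medium} items of size in $(3,4]$, \emph{large} items of size in $(6,9]$, and \emph{huge} items of size $>9$. The algorithm maintains several pools of bins: bins reserved for regular items (kept at size $\le 6$ while possible), bins for medium items (packed four or five to a bin), bins for large items (packed in pairs), and bins that may receive huge items. In the first phase, each arriving item is placed greedily into a bin of its own type, opening a new bin only when necessary; the first phase ends as soon as the number of non-empty ``small'' bins (those of size $\le 6$) reaches three times the number of still-empty bins. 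In the second phase we group bins into blocks of three non-empty bins plus one empty bin and pack the remaining items block by block, preferring to place large and huge items into the empty bin of a block.

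The core of the proof is a volume/counting argument showing the algorithm never fails, i.e.\ no bin ever exceeds capacity $18$ and all items fit into $m$ bins. I would argue by contradiction: suppose some item cannot be packed. Using the guarantee that all items fit into $m$ unit (rescaled: capacity-$12$) bins, the total size of all items is at most $12m$. The key step is to show that whenever the algorithm is stuck, the bins already contain total size exceeding $12m$, a contradiction. For this I would introduce, exactly as the excerpt promises, two functions on bins: a \emph{value} $v$ that counts items by type (used to control the first phase and the degenerate case with no empty bin), and the main \emph{weight} $w(B)$ that combines $s(B)$ with a bonus for each item of size $>6$ in $B$. The point of $w$ is that a block whose bins are individually short of average $12$ — e.g.\ three bins of size $\approx 4$ plus a fourth bin with two size-$7$ items, totalling $47<48$ — nevertheless contains five items of size $>6$, which the optimum cannot pack four to a block; the weight bonus captures exactly this slack, so that $\sum_B w(B)$ over a block is always at least $48$ when the algorithm is stuck, and $\sum_B w(B) = \sum_B s(B) + (\text{bonus}) \le 12m + (\text{global bound on }\#\text{ of large/huge items})$.

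Carrying this out requires a careful case analysis of the possible block types in the second phase, classified by how full the three non-empty bins are after phase one and by which second-phase items land in the block; for each block type one verifies the $w$-bound of $48$ per four bins (with the bonus), and one separately handles the terminal block and the case where phase one ends with no empty bins (here the value function $v$ and the $3\!:\!1$ ratio give a direct volume bound). One also has to check the easy invariants: regular bins stay $\le 6$ until phase two so any single item up to size $12$ still fits; medium bins of four items have size $\le 16$, and a fifth medium item keeps them $\le 20$ — wait, that needs the bound $\le 18$, so in fact four medium items have size in $(12,16]$ and a fifth has size $\le 4$ giving $\le 20$, so the algorithm must cap medium bins at four once size exceeds $14$; these case distinctions are exactly the ``other complications'' the excerpt mentions. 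Large bins hold two items of size $\le 9$, hence $\le 18$, which is why large items are capped at $9$.

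The main obstacle, as flagged in the excerpt, is the amortization across blocks: a naive block-by-block analysis gives only a bound strictly above $3/2$, so the proof must globally account for the ``extra'' large/huge items via the weight bonus and show the bonus collected never exceeds what the optimum is forced to waste. Concretely, the hard part is designing $w$ so that simultaneously (i) every feasible block configuration the algorithm can reach has $w$-weight $\ge 48$ per four bins, and (ii) the total bonus $\sum_B (w(B)-s(B))$ is bounded by a quantity that, added to $12m$, still leaves a contradiction — i.e.\ the optimum genuinely cannot absorb that many size-$>6$ items. Balancing these two constraints, together with the degenerate cases (few bins, phase one ending early, blocks missing their expected second-phase items), is where essentially all the technical work lies; the rest is bookkeeping on bin capacities to confirm the stretching factor $18/12 = 1.5$.
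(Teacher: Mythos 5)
Your plan reproduces the paper's strategy: the same rescaling ($12$ vs.\ $18$), the same four item classes, the $3{:}1$ stopping rule for phase one, blocks of three non-empty bins plus one empty bin, and two potential functions --- a value $v$ counting items of size $>6$ (plus medium items) and a weight $w$ combining size with a bonus per large/huge item, with the upper bound coming from the fact that an optimal bin holds at most one item of size $>6$. So the approach is the right one. However, as written the argument has two concrete problems beyond the deferred case analysis.

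First, the numbers in your contradiction do not close. You ask for $\sum_B w(B)\ge 48$ per block of four bins while conceding $\sum_B w(B)\le 12m+(\#\text{large/huge in OPT})\le 13m$. Summing the lower bound over $m/4$ blocks gives only $12m$, which does not exceed $13m$. The threshold per four bins must be $52$ (equivalently, normalize $w(B)=s(B)+k(B)-13$ and show the total is strictly positive, which is what the paper does); your own motivating example ($47$ of size plus $5$ large items) meets $52$ exactly, so the surplus forcing the strict inequality has to come from somewhere else. Second, and more seriously, a uniform per-block bound of the form ``every reachable block has weight $\ge 52$'' is false: the block containing the \emph{critical bin} --- the first bin of size at most $12$, which necessarily exists and may have size only slightly above $9$ --- falls short, as can the prefix of special bins ($L$, $M$, $T$) and a degenerate first block. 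The actual content of the paper's proof is precisely the amortization across non-uniform regions: a deficit of up to $6$ in the critical region $\calC$ is paid for by a surplus of more than $5$ in the region $\calF$ around the failing item (which contains two new large/huge items in one bin) and by the huge-item suffix $\calH$, with the middle blocks contributing $\ge 0$. Your sketch acknowledges that block-by-block analysis is insufficient but then proposes exactly a block-by-block invariant; you would also need the separate branch where phase one ends with huge-item bins and no empty bins (handled via $v$ rather than $w$), and the first-phase invariants (at most one tiny bin, at most one regular bin of size $\le 4$, etc.) on which the pairing bounds like $s(B)+s(B')>22$ depend.
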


We take an instance with an optimal packing into $m$ bins of size at
most 12 and, assuming that our algorithm fails, we derive a
contradiction. One way to get a contradiction is to show that the size
of all items is larger than $12m$. As stated above, 
we also use two bounds in the
spirit of weight functions: weight $w(i)$ and value $v(i)$. The weight
$w(i)$ is a slightly modified size to account for items of size larger
than 6. The value $v(i)$ only counts the number of items with
relatively large sizes. For our calculations, it is convenient to
normalize the weight $w(i)$ and value $v(i)$ so that they are at most
0 for bins in the optimal packing (see Lemma~\ref{l:weight}). To get a contradiction, it is then sufficient to prove that
the total weight or value of all bins is positive.

We classify the items based on their size $s(i)$ and define their
value $v(i)$ as follows.  
$$
\begin{array}{c|cccc}
s(i) & (9,12] &(6,9] &(3,4] &(0,3]\cup(4,6]\\
\hline\\[-0.9em]
\mbox{~~type~~} & \mbox{~~huge~~} &\mbox{~~large~~} &\mbox{~~medium~~} &\mbox{~~regular~~}\\
v(i) & 3 & 2 & 1 & 0
\end{array}
$$

\begin{definition}
For a set of items $A$, we define the value $v(A)=(\sum_{i\in
  A}v(i))-3$.

Furthermore we define weight $w(A)$ as follows. Let $k(A)$ be the
number of large and huge items in $A$. Then $w(A)=s(A)+k(A)-13$.

For a set of bins $\calA$ we define $v(\calA)=\sum_{A\in\calA}v(A)$,
$w(\calA)=\sum_{A\in\calA}w(A)$ and $k(\calA)=\sum_{A\in\calA}k(A)$.
\end{definition}

\begin{lemma}
\label{l:weight}
For any packing $\calA$ of a valid input instance into $m$ bins of an
arbitrary capacity,
we have $w(\calA)\leq 0$ and $v(\calA)\leq 0$.
\end{lemma}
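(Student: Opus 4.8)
The plan is to prove both inequalities by summing a per-bin bound over the $m$ bins of the packing $\calA$. Since $v(\calA)=\sum_{A\in\calA}v(A)$ and $w(\calA)=\sum_{A\in\calA}w(A)$, and the packing uses exactly $m$ bins, it suffices to show that every single bin $A$ in a valid packing (i.e., with $s(A)\le 12$, since the optimum uses bins of capacity $12$ after rescaling) satisfies $v(A)\le -c$ and $w(A)\le -c$ for constants whose sum over $m$ bins is nonpositive. Concretely, the normalization constants $-3$ and $-13$ in the definitions are chosen exactly so that $v(A)\le 0$ and $w(A)\le 0$ hold bin-by-bin; then summing gives $v(\calA)\le 0$ and $w(\calA)\le 0$ immediately. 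So the whole lemma reduces to two elementary per-bin claims.

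For the value bound: I would show $\sum_{i\in A}v(i)\le 3$ for any bin $A$ with $s(A)\le 12$. This is a small case analysis on how many ``valuable'' items fit. A huge item ($v=3$, size $>9$) leaves room for less than $3$, so no other item of size $>3$ fits alongside it, and a huge item alone gives exactly $v=3$; wait — one must check nothing of positive value squeezes in, and indeed any second item of positive value has size $>3$, total $>12$, contradiction. A large item ($v=2$, size $>6$) can be joined only by items of size $<6$; another large item would need total $>12$ — impossible — and a medium item ($v=1$, size $>3$) plus a large item gives total $>9\le 12$, which is allowed, for $v=3$; a third valuable item fails. Two mediums plus... etc. In every case $\sum v(i)\le 3$, hence $v(A)\le 0$.

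For the weight bound: recall $w(A)=s(A)+k(A)-13$ where $k(A)$ counts large and huge items. Since $s(A)\le 12$, it suffices to show $k(A)\le 1$, i.e., at most one item of size $>6$ per bin — which is immediate, as two such items have total size $>12$. Then $w(A)\le 12+1-13=0$. (If $k(A)=0$ then $w(A)\le 12+0-13=-1<0$.) Summing over the $m$ bins yields $w(\calA)\le 0$.

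The argument is entirely routine; there is no real obstacle. The only point requiring a modicum of care is the value case analysis — making sure every combination of huge/large/medium items that fits within capacity $12$ has total value at most $3$ — but this is a short finite check driven by the size lower bounds $9,6,3$ of the three valuable types. I would present it as: at most one item of size $>9$; if present, no other positive-value item fits; otherwise at most one large item, and if one large item is present at most one medium (giving $v\le 3$), while with no large item at most three mediums fit ($3\cdot 3=9\le 12$, a fourth would exceed $12$), again $v\le 3$.
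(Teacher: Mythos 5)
Your per-bin case analysis is correct, but you have misread the quantifier in the statement and this leaves a real gap. The lemma asserts the bound for \emph{any} packing $\calA$ of the instance into $m$ bins of \emph{arbitrary} capacity --- in the application, $\calA$ will be the algorithm's bins of capacity $18$, many of which will have $s(A)$ well above $12$. Your proof explicitly invokes $s(A)\le 12$ for every bin $A\in\calA$ (``every single bin $A$ in a valid packing (i.e., with $s(A)\le 12$ \dots)''), and both your value argument ($\sum_{i\in A}v(i)\le 3$) and your weight argument ($k(A)\le 1$, hence $w(A)\le 12+1-13$) collapse without that hypothesis: a bin of size $18$ can easily contain two items larger than $6$, giving $k(A)=2$ and $w(A)>0$, and likewise can have $\sum_{i\in A}v(i)>3$. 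So as written the per-bin bounds $v(A)\le 0$, $w(A)\le 0$ are simply false for the bins the lemma is about.

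The missing idea --- which is precisely what the paper's short proof uses --- is that both $v(\calA)$ and $w(\calA)$ are invariant under repartitioning the items among $m$ bins: $v(\calA)=\bigl(\sum_i v(i)\bigr)-3m$ and $w(\calA)=\bigl(\sum_i s(i)\bigr)+\bigl(\text{number of items with }s(i)>6\bigr)-13m$ depend only on the item multiset and on $m$, not on the particular packing. One therefore evaluates these packing-independent quantities on the \emph{optimal} packing, whose bins do satisfy $s(B)\le 12$, and there your per-bin arguments apply verbatim (at most one large/huge item per optimal bin gives $k\le m$; total size at most $12m$; and $\sum_{i\in B}v(i)\le 3$ by your case analysis). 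The conclusion then transfers back to the arbitrary packing $\calA$ because it has the same number of bins. Your case analysis is the right ingredient, but without the ``change of packing'' step the proof does not establish the lemma as stated.
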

\begin{proof}
For the value $v(\calA)$, no optimal bin can contain items
with the sum of their values larger than 3. The bound follows by
summing over all bins and the fact that the number of bins is the same
for the optimum and the considered packing. 

For $w(\calA)$, we have $s(\calA)\leq 12m$ and $k(\calA)\leq m$, as the
optimum packs all items in $m$ bins of volume $12$ and no bin can
contain two items larger than $6$. Thus
$w(\calA)=s(\calA)+k(\calA)-13m\leq 12m+m-13m=0$.
\qed
\end{proof}

\begin{figure}[th]
\begin{center}
\includegraphics[width=0.7\textwidth]{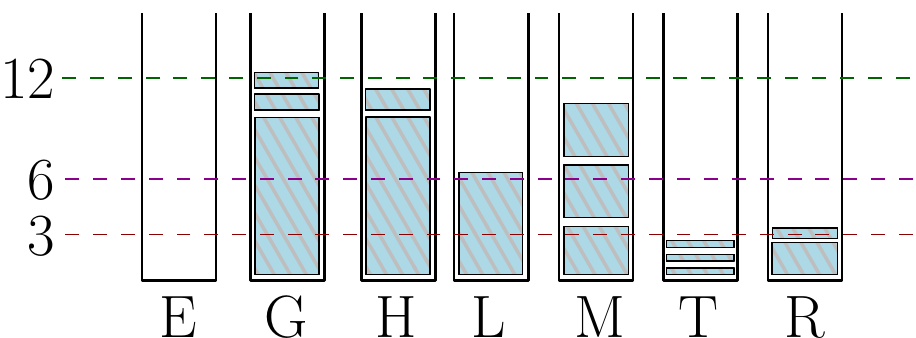}
\end{center}
\caption{An illustration of bin types during the first phase.}
\label{fig:1a}
\end{figure}
\noindent{\bf First phase.}
During the first phase, our algorithm maintains the invariant that only bins
of the following types exist. See Figure~\ref{fig:1a} for an
illustration of the bin types. 

\begin{definition}\label{d:types}
Given a bin $A$, we define the following bin types and introduce
letters that typically denote those bins:

\begin{compactitem}
\item 
{\bf Empty bins (E):} bins that have no item.
\item 
{\bf Complete bins (G):} all bins that have $w(A)\geq 0$ and $s(A)\geq
12$;
\item 
{\bf Huge-item bins (H):} all bins that contain a huge item (plus possibly some other items) and have $s(A)<12$;
\item 
{\bf One large-item bin (L):} a bin containing only a single large
item;
\item 
{\bf One medium-item bin (M):} a non-empty bin with $s(A)<13$ and only medium
items;
\item 
{\bf One tiny bin (T):} a non-empty bin with $s(A)\leq3$;
\item 
{\bf Regular bins (R):} all other bins with $s(A)\in(3,6]$;
\end{compactitem}
\end{definition}

\algobox{{\bf First-phase algorithm:}

During the algorithm, let $e$ be the current number of empty bins and
$r$ the current number of regular bins. 
\medskip
\begin{compactenum}[(1)]
\item While $r < 3e$, consider the next item $i$ and pack it as
  follows, using bins of capacity 18; 
if more bins satisfy a condition, choose among them arbitrarily:
\item \indentskip If $i$ is regular:
\item \indentskip \indentskip If there is a huge-item bin, pack $i$ there.
\item \indentskip \indentskip Else, if there is a regular bin $A$ 
with $s(A)+s(i)\leq 6$, pack it there.
\item \indentskip \indentskip Else, if there is a tiny bin $A$
with $s(A)+s(i)\leq 6$, pack it there.
\item \indentskip If $i$ is medium and there is a medium-item bin where $i$ fits, pack it there.
\item \indentskip If $i$ is large and there is a large-item bin where $i$ fits, pack it there.
\item \indentskip If $i$ is huge:
\item \indentskip \indentskip If there is a regular bin, pack $i$ there.
\item \indentskip \indentskiptwodigit Else, if there is a tiny bin, pack $i$ there.
\item \indentskiptwodigit If $i$ is still not packed, pack it in an
  empty bin.
\end{compactenum}
}

First we observe that the algorithm described in the box above is
properly defined.  The stopping condition guarantees that the
algorithm stops when no empty bin is available. Thus an empty bin is
always available and each item $i$ is packed.  We now state the basic
properties of the algorithm.

\begin{lemma}
\label{l:1} 
At any time during the first phase the following holds:
\begin{compactenum}[\rm(i)]
\item 
\label{i1:classify} 
All bins used by the algorithm are of the types from Definition~\ref{d:types}.
\item 
\label{i1:complete} 
All complete bins $B$ have $v(B)\geq0$.
\item 
\label{i1:hr} 
If there is a huge-item bin, then there is no regular and no tiny bin.
\item 
\label{i1:lm} 
There is at most one large-item bin and at most one medium-item bin.
\item 
\label{i1:tiny} 
There is at most one tiny bin $T$. If $T$ exists, then for any regular
bin, $s(T)+s(R)>6$. There is at most one regular bin $R$ with
$s(R)\leq 4$.
\item 
\label{i1:er} 
At the end of the first phase $3e\leq r \leq 3e+3$.
\end{compactenum}
\end{lemma}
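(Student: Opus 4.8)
The plan is to prove all six statements simultaneously by induction on the number of packed items, checking that each of the eleven packing rules in the first-phase algorithm preserves every invariant. Initially all bins are empty, so every statement holds vacuously. For the inductive step, I would fix an item $i$ and case-split on which rule of the algorithm packs it, tracking how the affected bin changes type.

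First I would handle~(\ref{i1:classify}): for each rule I check that the bin receiving $i$ moves into one of the types of Definition~\ref{d:types}. For instance, when a regular item is added to a regular bin $A$ with $s(A)+s(i)\le 6$ (rule~(4)), the bin stays in $(3,6]$ hence remains regular; when it is added to a tiny bin $A$ with $s(A)+s(i)\le 6$ (rule~(5)), either the result is still $\le 3$ (tiny) or it lands in $(3,6]$ (regular). When a huge item goes into a regular or tiny bin (rules~(9)--(10)), the new size is at most $6+12=18$; if it reaches $12$ the bin is complete, otherwise it is a huge-item bin. A fresh item placed in an empty bin (rule~(11)) creates a tiny, regular, medium-item, large-item, or huge-item bin depending on its size. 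Rule~(3) placing a regular item into a huge-item bin $H$ keeps $s(H)$ possibly below or above $12$, giving a huge-item or complete bin. Statement~(\ref{i1:complete}) is then immediate: a bin only becomes complete by one of these rules, and in each such case one verifies $v(B)\ge 0$ directly from the item types present (a complete bin always contains a huge item, worth $v=3$, so $v(B)=\sum v(i)-3\ge 0$).

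For~(\ref{i1:hr}), the key observation is that a huge-item bin is created only in rules~(3), (9), (10), (11). A huge-item bin cannot coexist with a regular or tiny bin because: rule~(9)/(10) consumes a regular (resp. tiny) bin to absorb the huge item, and these rules fire before rule~(11); and once a huge-item bin exists, rule~(3) forces every subsequent regular item into it, so no new regular bin is created and no tiny bin grows into a regular one — I would argue no new tiny bin is created either, since a tiny bin arises only from rule~(11) with a small regular item, which rule~(3) pre-empts. Statement~(\ref{i1:lm}) follows because a new large-item (resp. medium-item) bin is opened via rule~(11) only when rule~(7) (resp. rule~(6)) failed, i.e. only when no existing large-item (resp. medium-item) bin can accept $i$; combined with the fact that a large-item bin with one item of size in $(6,9]$ always has room for nothing larger than $18-6$, one shows a second such bin is never opened while the first exists. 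Statement~(\ref{i1:tiny}) is the delicate bookkeeping one: a second tiny bin is never opened because rule~(5) would have packed a small regular item into the existing tiny bin if it fit, and rule~(11) opens a new tiny bin only after rule~(5) fails; the inequality $s(T)+s(R)>6$ for every regular bin $R$ holds because otherwise rule~(5) (or rule~(4), had the roles been such) would have merged them; and ``at most one regular bin of size $\le 4$'' holds because a second regular bin of size $\le 4$ would have been available as the target of rule~(4) for the regular item that instead opened it, and two regular bins each of size $\le 4$ and with $s(R_1)+s(R_2)>6$ forces... — here I expect to need a careful argument that any regular item opening a bin with size in $(3,4]$ could not have been placed in an existing regular bin of size $\le 4$, since $4 + s(i) \le 4 + 3 \le 7$ might exceed $6$, so the obstruction is real and must be tracked invariant-style.

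Finally~(\ref{i1:er}): the first phase runs while $r<3e$ and stops as soon as $r\ge 3e$. Each item changes $r$ and $e$ by a bounded amount — opening a fresh regular bin increases $r$ by $1$ (and $e$ by $-1$), converting a tiny bin to regular increases $r$ by $1$ without changing $e$, and consuming a regular bin for a huge item decreases $r$ by $1$; $e$ only ever decreases, by at most $1$ per step. Since we stop at the first moment $r\ge 3e$, just before that step we had $r<3e$, i.e. $r\le 3e-1$ with the old values; a single step raises $3e-r$ by at most... I would enumerate the few step types and bound the jump in $r-3e$ by $+3$ at most (the worst case being a huge item that eats a regular bin, dropping $r$ by $1$ while $e$ drops by... no, $e$ doesn't drop there) — actually the binding case is opening a regular bin, which changes $r-3e$ by $1-3(-1)=4$; re-examining, one shows $3e \le r \le 3e+3$ at termination by checking that each step increases $r-3e$ by at most $3$ except possibly the last, or more simply that before the final step $r \le 3e - 1$ and the step adds at most $4$ to $r-3e$, giving $r \le 3e+3$, while $r \ge 3e$ is the stopping condition. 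The main obstacle overall is statement~(\ref{i1:tiny}): the interaction between tiny bins, regular bins of size $\le 4$, and the various rules that can create or consume them requires the most careful case analysis, and it is precisely the size-$(3,4]$ regime the paper flagged as delicate.
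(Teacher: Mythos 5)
Your overall strategy---induction over the arriving items with a case split on the packing rules---is exactly the paper's proof, and your treatments of (i), (iii) and (vi) are essentially correct (for (vi) your final accounting, $r-3e\le-1$ before the last step and a jump of at most $4$ in one step, is the paper's bound; in (iv) the muddled remark about ``room for nothing larger than $18-6$'' should simply be that a large-item bin has size at most $9$, so any arriving large item, of size at most $9$, always fits and completes it). However, there are two genuine problems. First, your justification of (ii) rests on the claim that ``a complete bin always contains a huge item''; this is false. A bin also becomes complete when a second large item is added to the large-item bin (no huge item, and $v(B)=2+2-3=1$) or when a medium item pushes the medium-item bin to size at least $13$ (it then holds $4$ or $5$ medium items and $v(B)\ge1$). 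As written, your verification of $v(B)\ge0$ does not cover these two creation events, which are exactly the cases the paper checks separately.

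Second, and more importantly, you never actually prove the third claim of (v), that at most one regular bin has size at most $4$; you correctly flag it as the delicate point but leave it open, and the fragment you offer invokes $s(R_1)+s(R_2)>6$ for two regular bins, which is not one of the available invariants (the $>6$ inequality relates the tiny bin to regular bins). The missing argument is short but essential: a regular bin of size at most $4$ can only be created by adding a regular item $i$ of size at most $3$ to the tiny bin $T$, since an empty bin receiving a regular item becomes either tiny or a bin of size in $(4,6]$. If at that moment another regular bin $R'$ with $s(R')\le4$ already existed, then the invariant $s(T)+s(R')>6$ gives $s(T)>2$, and the fact that rule (4) did not place $i$ into $R'$ gives $s(R')+s(i)>6$, i.e.\ $s(i)>2$; hence the converted bin has size $s(T)+s(i)>4$ and is not a second regular bin of size at most $4$. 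Note that the obstruction you worried about (the item possibly not fitting under the cap of $6$ in the existing small regular bin) is not a difficulty to be overcome but precisely the source of the bound $s(i)>2$.
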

\begin{proof}
(\ref{i1:classify})-(\ref{i1:tiny}): We verify that these invariants
  are preserved when an item of each type arrives and also that the
  resulting bin is of the required type; the second part is always
  trivial when packing in an empty bin.

If a huge item arrives and a regular bin exists, it always fits there,
thus no huge-item bin is created and (\ref{i1:hr}) cannot become
violated. Furthermore, the resulting size is more than $12$, thus the
resulting bin is complete. Otherwise, if a tiny bin exists, the huge
item fits there and the resulting bin is either complete or huge. In
either case, if the bin is complete, its value is 0 as it contains a
huge item.

If a large item arrives, it always fits in a large-item bin if it
exists and makes it complete; its value is at least 1, as it contains
two large items. Thus a second large-item bin is never
created and (\ref{i1:lm}) is not violated.

If a medium item arrives, it always fits in a medium-item bin if it
exists; the bin is then complete if it has size at least 13 and then
its value is at least 1, as it contains 4 or 5 medium items; otherwise
the bin type is unchanged. Again, a second medium-item bin is never
created and (\ref{i1:lm}) is not violated.

If a regular item arrives and a huge-item bin exists, it always fits
there, thus no regular bin is created and (\ref{i1:hr}) cannot become
violated. Furthermore, if the resulting size is at least $12$, the bin
becomes complete and its value is 0 as it contains a huge item; otherwise
the bin type is unchanged.

In the last case, a regular item arrives and no huge-item bin exists.
The algorithm guarantees that the resulting bin has size at most 6,
thus it is regular or tiny. We now proceed to verify
(\ref{i1:tiny}). A new tiny bin $T$ can be created only by packing an
item of size at most 3 in an empty bin. First, this implies that no
other tiny bin exists, as the item would be put there, thus there is
always at most one tiny bin. Second, as the item is not put in any
existing regular bin $R$, we have $s(R)+s(T)>6$ and this also holds
later when more items are packed into any of these bins. A new regular
bin $R$ with $s(R)\leq 4$ can be created only from a tiny bin; note
that a bin created from an empty bin by a regular item is either tiny
or has size in $(4,6]$. If another regular bin with size at most 4
  already exists, then both the size of the tiny bin and the size of
  the new item are larger than 2 and thus the new regular bin has size
  more than 4. This completes the proof of (\ref{i1:tiny}). 

(\ref{i1:er}): Before an item is packed, the value $3e-r$ is at least
  $1$ by the stopping condition. Packing an item may change $e$ or $r$ (or both) by at most $1$. Thus after
  packing an item we have $3e-r\geq 1-3-1=-3$, i.e., $r\leq 3e+3$. If
  in addition $3e\leq r$, the algorithm stops in the next step and
  (\ref{i1:er}) holds.  \qed
\end{proof}

If the algorithm packs all items in the first phase, it
stops. Otherwise according to Lemma~\ref{l:1}(\ref{i1:hr}) we split
the algorithm in two very different branches. If there is at least one
huge-item bin, follow the second phase with huge-item bins below. If
there is no huge-item bin, follow the second phase with regular bins.

Any bin that is complete is not used in the second phase. In
addition to complete bins and either huge-item bins, or regular and
empty bins, there may exist at most three {\em special bins} denoted
and ordered as follows: the large-item bin $L$, the medium-item bin
$M$, and the tiny bin $T$.

\medskip
\noindent
{\bf Second phase with huge-item bins.} In this case, we assume
that a huge-item bin exists when the first phase ends. By Lemma~\ref{l:1}(\ref{i1:hr}),
we know that no regular and tiny bins exist. There are no empty bins either,
as we end the first phase with $3e \le r = 0$. With only a few types of bins
remaining, the algorithm for this phase is very simple:

\algobox{
{\bf Algorithm for the second phase with huge-item bins:}

Let the list of bins $\calL$ contain first all the huge-item bins,
followed by the special bins $L$, $M$, in this order, if they
exist.

\begin{compactenum}[(1)]
\item For any incoming item $i$:
\item \indentskip Pack $i$ using First Fit on the list $\calL$, with
  all bins of capacity 18. 
\end{compactenum}
}

Suppose that we have an instance that has a packing into bins of
capacity 12 and on which our algorithm fails. 
We may assume that the algorithm fails on the last item $f$.  By
considering the total volume, there always exists a bin with size at
most $12$. Thus $s(f)>6$ and $v(f)\ge2$.

If during the second phase an item $n$ with $s(n)\leq6$ is packed into
the last bin in $\calL$, we know that all other bins have size more
than $12$, thus all the remaining items fit into the last bin.
Otherwise we consider $v(\calL)$. Any complete bin $B$ has
$v(B)\geq0$ by Lemma~\ref{l:1}(\ref{i1:complete}) and each huge-item
bin gets nonnegative value, too. Also $v(L)\geq-1$ if $L$
exists. This shows that $M$ must exist, since otherwise
$v(\calL)+v(f)\geq-1+2\geq 1$, a contradiction.

Now we know that $M$ exists, furthermore it is the last bin and thus
we also know that no regular item is packed in $M$. Therefore $M$
contains only medium items from the first phase and possibly large
and/or huge items from the second phase. We claim that
$v(M)+v(f)\geq2$ using the fact that $f$ does not fit into $M$ and $M$
contains no item $a$ with $v(a)=0$: If $f$ is huge we have $s(M)>6$,
thus $M$ must contain either two medium items or at least one medium
item together with one large or huge item and $v(M)\geq -1$.  If $f$
is large, we have $s(M)>9$; thus $M$ contains either three medium
items or one medium and one large or huge item and $v(M)\geq 0$. Thus
we always have $v(\calL)\geq -1+v(M)+v(f)\geq 1$, a contradiction.

\medskip
\noindent
{\bf Second phase with regular bins.}  
Let $\calE$ resp. $\calR$ be the set of empty resp. regular bins at
the beginning of the second phase, and let $e=|\calE|$. Let
$\lambda\in\{0,1,2,3\}$ be such that $|\calR|=3e+\lambda$;
Lemma~\ref{l:1}(\ref{i1:er}) implies that $\lambda$ exists. Note
that it is possible that $\calR=\emptyset$, in that case $e=\lambda=0$.

We organize the remaining non-complete bins into blocks $\calB_i$, and
then order them into a list $\calL$, as follows:

\begin{definition}\label{d:blocks}
Denote the empty bins $E_1, E_2, \dots, E_e$.  The regular
bins are denoted by $R_{i,j}$, $i=1,...,e+1$, $j=1,2,3$. The $i$th
\textbf{block} $\calB_i$ consists of bins $R_{i,1},R_{i,2},R_{i,3},E_i$ in this
order. There are several modifications to this rule:
\begin{compactenum}[\rm (1)] 
\item The first block $\calB_1$ contains
only $\lambda$ regular bins, i.e., it contains $R_{1,1},\ldots,R_{1,\lambda},E_1$ in this
order; in particular, if $\lambda=0$ then $\calB_1$ contains only
$E_1$.

\item The last block $\calB_{e+1}$ has no empty bin,
only exactly $3$ regular bins. 

\item If $e=0$ and $r=\lambda>0$ we define only a single block $\calB_1$
which contains $r=\lambda$ regular bins $R_{1,1},\ldots,R_{1,\lambda}$. 

\item If $e=r=\lambda=0$, there is no block, as there are no empty and
  regular bins. 

\item  If $r>0$, we choose as the first regular bin the one with
size at most $4$, if there is such a bin. 
\end{compactenum}
Denote the first regular bin by $\Rbar$.  If no regular bin exists
(i.e., if $r=0$), $\Rbar$ is undefined.
\end{definition}
Note that $\Rbar$ is either the first bin $R_{1,1}$ in $\calB_1$ if
$\lambda>0$ or the first bin $R_{2,1}$ in $\calB_2$ if $\lambda=0$. By
Lemma~\ref{l:1}(\ref{i1:tiny}) there exists at most one regular bin
with size at most $4$, thus all the remaining $R_{i,j}\neq\Rbar$ have
$s(R_{i,j})>4$.
\begin{definition}\label{d:list}
The \textbf{list of bins} $\calL$ we use in the second phase contains first the
special bins and then all the blocks $\calB_1$, \ldots, $\calB_{e+1}$.
Thus the list $\calL$ is (some or all of the first six bins may not
exist):
\[
L, M, T, R_{1,1}, R_{1,2}, R_{1,3}, E_1, R_{2,1}, R_{2,2}, R_{2,3},
E_2, \dots,
E_e,  R_{e+1,1}, R_{e+1,2}, R_{e+1,3}.
\]
\end{definition}
Whenever we refer to the ordering of the bins, we mean the ordering in
the list $\calL$. See Figure~\ref{fig:1} for an illustration.

\begin{figure}[th]
\begin{center}
\includegraphics[width=\textwidth]{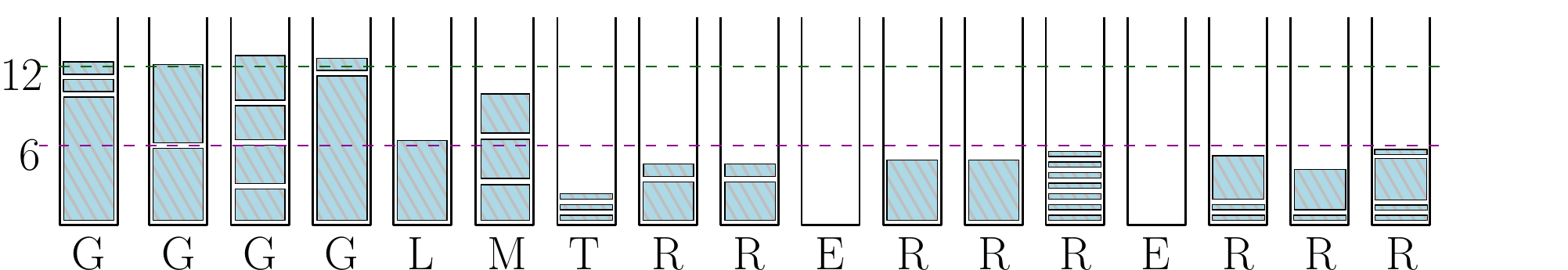}
\end{center}
\caption{A typical state of the algorithm after the first phase. The bin labels correspond to the particular bin types. $G$ denotes complete bins, other labels are the initial letters of the bin types. The non-complete bins (other than $G$) are ordered as in
  the list $\calL$ at the beginning of the second phase with regular
  bins.}
\label{fig:1}
\end{figure}

\algobox{{\bf Algorithm for the second phase with regular bins:}

Let $\calL$ be the list of bins as in Definition~\ref{d:list}, with
all bins of capacity 18. 

\begin{compactenum}[(1)]

\item For any incoming item $i$:
\item \indentskip If $i$ is huge, pack it using First Fit on the
  reverse of the list $\calL$.
\item \indentskip In all other cases, pack $i$ using First Fit on the normal list $\calL$.
\end{compactenum}
}

Suppose that we have an instance that has a packing into bins of
capacity 12 and on which our algorithm fails. We may assume that the
algorithm fails on the last item. Let us denote this item by $f$.
We have $s(f)>6$, as otherwise all bins have size more than 12,
contradicting the existence of optimal packing.
Call the items that arrived in the second phase {\em new} (including
$f$), the items from the first phase are {\em old}.
See Figure~\ref{fig:2} for an illustration of a typical final
situation (and also of notions that we introduce later).

Our overall strategy is to obtain a contradiction by showing that  
$$w(\calL)+w(f)>0\,.$$ 
In some cases, we instead argue that
$v(\calL)+v(f)>0$ or $s(\calL)+s(f)>12|\calL|$. Any of these is
sufficient for a contradiction, as all complete bins have both
value and weight nonnegative and size at least 12.

Let $\calH$ denote all the bins from $\calL$ with a huge item,
and let $\hmodfour=|\calH| \bmod 4$. First
we show that the average size of bins in $\calH$ is large and exclude
some degenerate cases; in particular, we exclude the case when no
regular bin exists.

\begin{lemma}
\label{l:huge} 
Let $\rho$ be the total size of old items
in $\Rbar$ if $\calR\neq\emptyset$ and $\Rbar\in\calH$, otherwise set $\rho=4$.
\begin{compactenum}[\rm (i)]
\item \label{i:hspecial} 
The bins $\calH$ are a final segment of the list $\calL$ and
$\calH\subsetneq\calE\cup\calR$. In particular, $\calR\neq\emptyset$
and $\Rbar$ is defined.
\item \label{i:hr}
We have
$s(\calH)\geq12|\calH|+\hmodfour+\rho-4$. 
\item \label{i:h}
If $\calH$ does not include $\Rbar$, then
$s(\calH)\geq12|\calH|+\hmodfour\geq12|\calH|$. 
\item \label{i:hfirst}
If $\calH$ includes $\Rbar$, then
$s(\calH)\geq12|\calH|+\hmodfour-1\geq12|\calH|-1$.
\end{compactenum} 
\end{lemma}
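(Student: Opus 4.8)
The plan is to establish the structural claim (i) first and then read off the volume bounds (ii)--(iv) by a short counting argument over the blocks.

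For (i), recall that in this branch no huge-item bin survives the first phase, so every bin of $\calL$ holding a huge item received it during the second phase, where huge items are packed by First Fit on the reversed list $\calL$. A huge item has size at most $12$ and hence fits into any bin of size at most $6$, in particular into every empty bin and into every regular bin before it is first touched in the second phase; moreover two huge items never fit together in a bin of capacity $18$, since each exceeds $9$. Reading $\calL$ backwards, its part after the special bins is a sequence of empty and regular bins, so I would argue by induction over the items of the second phase that the set of bins holding a huge item is always a final segment of $\calL$: non-huge items, placed by ordinary First Fit, never change this set, while an incoming huge item $i$ is rejected by every bin that already holds one and is therefore placed into the last huge-item-free bin of $\calL$ --- unless that bin (which is empty or regular) has meanwhile been filled above $18-s(i)\ge 6$ by ordinary First Fit. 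I would rule this out, and likewise show that $\calH$ never reaches a special bin, never exhausts all of $\calE\cup\calR$, and that $\calR\neq\emptyset$, by a ``nearly everything is full'' argument: each of these configurations makes every bin of $\calL$ so full that $w(\calL)+w(f)>0$ (using $s(f)>6$ and Lemma~\ref{l:weight}), which is the desired contradiction. This gives $\calH\subsetneq\calE\cup\calR$ and, in particular, $\calR\neq\emptyset$, so $\Rbar$ is defined.

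For (ii)--(iv), assume (i). Read backwards, the part of $\calL$ after the special bins is $R_{e+1,3},R_{e+1,2},R_{e+1,1},E_e,R_{e,3},R_{e,2},R_{e,1},E_{e-1},\dots$: consecutive groups of four bins, each group being three regular bins followed by one empty bin, up to the possibly short first block. As $\calH$ is a final segment of this sequence with $\calH\subsetneq\calE\cup\calR$, it is made up of $q=\lfloor|\calH|/4\rfloor$ complete such groups preceded by the $\hmodfour$ bins directly in front of them, and these $\hmodfour$ leftover bins are regular (they are the last regular bins of one block). Every bin of $\calH$ holds exactly one huge item of size $>9$; a regular bin of $\calH$ holds in addition its old items, of total size $>4$ unless it is $\Rbar$, in which case their total size equals $\rho$; an empty bin of $\calH$ holds, besides its huge item, only second-phase items. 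Thus a complete group of four has size more than $3\cdot 13+9=48=4\cdot 12$ and each leftover regular bin has size more than $13$, so $s(\calH)>48q+13\hmodfour=12|\calH|+\hmodfour$ if $\Rbar\notin\calH$; if $\Rbar\in\calH$, the same estimate with one regular bin's old size lowered from (more than) $4$ to $\rho$ gives $s(\calH)>12|\calH|+\hmodfour+\rho-4$. Since $\rho=4$ by definition when $\Rbar\notin\calH$, this is precisely (ii); (iii) is the case $\rho=4$, and (iv) follows from $\rho>3$ (a regular bin always has size more than $3$).

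The main obstacle is (i), specifically the contiguity of $\calH$ and the exclusion of the degenerate configurations: these force one to control the interaction between the reversed First Fit used for huge items and the ordinary First Fit used for all other items, and to carry out the fullness and volume bookkeeping that relies on the algorithm failing on $f$ with $s(f)>6$ and on Lemma~\ref{l:weight}. By contrast, granted (i), the block-counting behind (ii)--(iv) is routine.
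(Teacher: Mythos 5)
Your reduction of (ii)--(iv) to (i) is sound and essentially identical to the paper's counting (groups $E_i,R_{i+1,1},R_{i+1,2},R_{i+1,3}$ of size $>48$, leftover regular bins of size $>13$, with the $\rho$-adjustment for $\Rbar$). The problem is part (i), which you yourself call the main obstacle: your sketch for it is missing the two ideas that actually carry the paper's proof. For contiguity, the scenario you propose to ``rule out'' --- the last huge-item-free bin $B$ having been filled above $18-s(i)$ by ordinary First Fit --- cannot be ruled out; it can perfectly well occur. What saves contiguity is a different observation: if a huge item does not fit into such a bin $B\in\calE\cup\calR\cup\{T\}$, then (since its old items total at most $6$) $B$ contains a \emph{new} non-huge item of size at most $9$, and because that item was placed by First Fit on the normal list it did not fit into any bin preceding $B$; the huge item, being larger, does not fit into any earlier bin either, so it cannot be packed before $B$. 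This ``certificate'' argument is what makes $\calH$ a final segment and also shows that no special bin carries a huge item as soon as some bin of $\calE\cup\calR\cup\{T\}$ does not; nothing in your plan substitutes for it, and trying to exclude the configuration by a fullness/weight contradiction presupposes exactly the information this observation provides.

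Second, the degenerate cases (in the paper: $\calH\cap(\calE\cup\calR)=\calE\cup\calR$, including $\calE\cup\calR=\emptyset$, with $0$--$3$ special bins lacking a huge item) cannot all be dispatched by ``everything is so full that $w(\calL)+w(f)>0$.'' In the subcase where $L$ and $M$ are the huge-free special bins and $M$ received no new item, these bins need not be full at all: $L$ may hold a single old large item just above $6$ and $M$ two old medium items just above $6$. A concrete algorithm-consistent state --- three regular bins of sizes about $13,13,12$ each with one huge item, $E_1$ of size about $9$ with a huge item, $s(L)\approx 6$, $s(M)\approx 6.2$, $s(f)\approx 12$ --- satisfies every First Fit fullness constraint you could invoke, yet has $w(\calL)+s(f)+1\approx -0.8<0$ and total size below $12|\calL|$. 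The paper closes precisely this subcase with the value function $v$ (too many large/huge and medium items for the optimum), not with $w$ or size, and handles the remaining degenerate cases by pairing the new item in $T$ or $M$ with specific earlier bins via the First Fit property. So the route you outline for (i) is not just unfinished; as stated (weight only, generic fullness) it would fail, and the missing ingredients are the First Fit certificate for contiguity and the switch to $v$ in the $L,M$ case.
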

\begin{proof}
First we make an easy observation used later in the proof. If
$\calE\cup\calR\cup\{T\}$ contains a bin $B$ with no huge item,
then no bin preceding $B$ contains a huge item.  Indeed, if a huge
item does not fit into $B$, then $B$ must contain a new item $i$ of
size at most $9$. This item $i$ was packed using First Fit on the
normal list $\calL$, and therefore it did not fit into any previous
bin. Thus the huge item also does not fit into any previous bin, and
cannot be packed there.

Let $\calH'=\calH\cap(\calE\cup\calR)$. We begin proving our lemma for
$\calH'$ in place of $\calH$.  That is, we ignore the special bins at
this stage. 
The previous observation shows that $\calH'$ is a final segment of the
list.

We now prove the claims (\ref{i:hr})--(\ref{i:hfirst}) with
$\calH'$ in place of $\calH$. 
All bins $R_{i,j}$ with a huge item have size at least $4+9=13$, with
a possible exception of $\Rbar$ which has size at least
$\rho+9=13+\rho-4$, by the definition of $\rho$. Each $E_i$ with a
huge item has size at least $9$. Thus for each $i$ with
$E_i\in\calH'$, $s(E_i)+s(R_{i+1,1})+s(R_{i+1,2})+s(R_{i+1,3})\geq
4\cdot 12$, with a possible exception of $i=1$ in the case when
$\lambda=0$. Summing over all $i$ with $E_i\in\calH'$ and the $\hmodfour$
bins in $\calR$ from the first block intersecting $\calH'$, and
adjusting for $\Rbar$ if $\Rbar\in\calH'$, (\ref{i:hr}) for $\calH'$
follows. The claims (\ref{i:h}) and (\ref{i:hfirst}) for $\calH'$ are
an immediate consequence as $\rho>3$ if $\Rbar\in\calH'$ and $\rho=4$
otherwise.

We claim that the lemma for $\calH$ follows if
$\calH'\subsetneq\calE\cup\calR$. Indeed, following the observation at
the beginning of the proof, the existence of a bin in $\calE\cup\calR$
with no huge item implies that no special bin has a huge item, i.e.,
$\calH'=\calH$, and also $\calH'=\calH$ is a final segment of
$\calL$. Furthermore, the existence of a bin in $\calE\cup\calR$
together with $3e\leq r$ implies that there exist at least one regular
bin, thus also $\Rbar$ is defined and (\ref{i:hspecial}) follows.
Claims (\ref{i:hr}), (\ref{i:h}), and (\ref{i:hfirst}) follow from
$\calH'=\calH$ and the fact that we have proved them for $\calH'$.

Thus it remains to show that $\calH'\subsetneq\calE\cup\calR$.
Suppose for a contradiction that $\calH'=\calE\cup\calR$.

If $T$ exists, let $o$ be the total size of old items in $T$. If also
$\Rbar$ exists, Lemma~\ref{l:1}(\ref{i1:tiny}) implies that
$o+\rho>6>4$, otherwise $o+\rho>4$ trivially. In either case, summing
with (\ref{i:hr}) we obtain
\begin{equation}\label{eq:Told}
  o+s(\calH')>12|\calH'|\,.
\end{equation}

Now we proceed to bound $s(\calH)$. We have already shown claim
(\ref{i:hfirst}) for $\calH'$, i.e., $s(\calH')\geq12|\calH'|-1$. If $L$ or
$M$ has a huge item, the size of the bin is at least 12, as there is
an old large or medium item in it. If $T$ has a huge item, then
(\ref{eq:Told}) implies $s(T)+s(\calH')>9+
o+s(\calH')>9+12|\calH'|$. Summing these bounds we obtain

\begin{equation}
s(\calH) > 12|\calH|-3 \,. \label{eq:T}
\end{equation}
 
We now derive a contradiction in each of the following four cases.

\mycase{Case 1:} All special bins have a huge item.  Then
$\calL=\calH$ and (\ref{eq:T}) together with $s(f)>6$ implies
$s(\calL)+s(f)>12|\calL|+3$, a contradiction.

\mycase{Case 2:} There is one special bin with no huge item.  Then its
size together with $f$ is more than 18, thus (\ref{eq:T}) together
with $s(f)>6$ implies $s(f)+s(\calL)>18+12|\calH|-3>12|\calL|$, a
contradiction.

\mycase{Case 3:} There are two special bins with no huge item and these
bins are $L$ and $M$.

Suppose first that $M$ contains a new item $n$. Then $s(L)+s(n)>18$ by
the First Fit packing rule. The bin $M$ contains at least one old
medium item.  Thus, using (\ref{eq:T}), we get
$s(f)+s(\calL)>6+s(L)+s(n)+3+12|\calH|-3> 24+ 12|\calH|=12|\calL|$, a
contradiction.

If $M$ has no new item, then either $f$ is huge and $M$ has at least
two medium items, or $f$ is large and $M$ has at least three items. In
both cases $v(M)+v(f)\geq 2$. Also $v(L)\geq -1$ since $L$ has a large
item, and $v(\calH)\geq0$ as each bin has a huge item. Altogether we
get that the total value $v(\calL)>0$, a contradiction.

\mycase{Case 4:} There are two or three special bins with no huge
item, one of them is $T$. Observe that the bin $T$ always contains a
new item $n$, as the total size of all old items in it is at most $3$. 

If there are two special bins with no huge item, denote the first one by
$B$. As we observed at the beginning of the proof, if $T$ exists and
has no huge item, no special bin can contain a huge item, thus the
third special bin cannot exist. We have
$s(B)+s(n)>18$, summing with (\ref{eq:Told}) we obtain
$s(f)+s(\calL)\geq
s(f)+s(B)+s(n)+o+s(\calH')>6+18+12|\calH'|=12|\calL|$, a contradiction.

If there are three special bins with no huge item, we have
$s(f)+s(L)>18$ and $s(M)+s(n)>18$. Summing with (\ref{eq:Told}) we
obtain $s(f)+s(\calL)>18+18+o+s(\calH')>36+12|\calH'|=12|\calL|$, a
contradiction.
\qed
\end{proof}

Having proven Lemma~\ref{l:huge}, we can infer existence of the following
two important bins, which (as we will later see) split the instance into three
logical blocks:

\begin{definition} ~ 

\begin{compactitem} \label{d:fc}
\item Let $F$, the \textbf{final bin} be the last bin in $\calL$ before
$\calH$, or the last bin if $\calH=\emptyset$.
\item Let $C$, the \textbf{critical bin}, be the
first bin in $\calL$ of size at most 12. 
\end{compactitem}
\end{definition}

First note that both $F$ and $C$ must exist: $F$ exists by
Lemma~\ref{l:huge}(\ref{i:hspecial}), which also shows that
$F\in\calE\cup\calR$. $C$ exists, as otherwise the total size is more
than $12m$.

To make our calculations easier, we modify the packing so that $f$ is put
into $F$, even though it exceeds the capacity of $F$.
Thus $s(F)>18$ and $f$ (a new item) as well as all the other new items packed
in $F$ or in some bin before $F$ satisfy the property that they do not
fit into any previous bin.  See Figure~\ref{fig:2} for an illustration
of the definitions.

We start by some easy observations. Each bin, possibly with the
exception of $L$ and $M$, contains a new item, as it enters the phase
with size at most 6, and the algorithm failed. Only items of size at
most $9$ are packed in bins before $F$; in $F$ itself only the item
$f$ can be huge.  The bin $F$ always has at least two new items, one that did
fit into it and $f$. All the new items in the bins after $C$ are
large, except for the new huge items in $\calH$ and $f$ which can be
large or huge. (Note that at this point of the proof it is possible
that $C$ is after $F$; we will exclude this possibility soon.)

\begin{figure}[th]
\begin{center}
\includegraphics[width=1\textwidth]{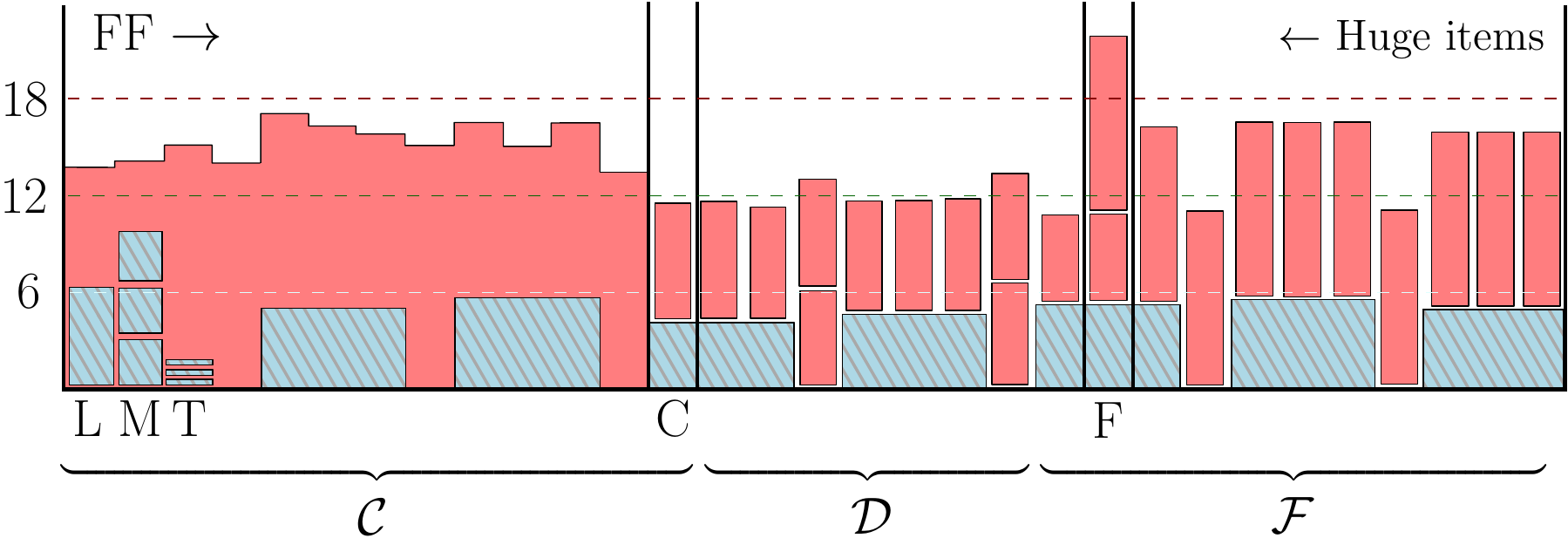}
\end{center}
\caption{A typical state of the algorithm after the second phase with regular
  bins. The gray (hatched) areas denote the old items (i.e., packed in
  the first phase), the red (solid) regions and rectangles denote the
  new items (i.e., packed in the second phase). The bins that are
  complete at the end of the first phase are not shown. The item $f$
  on which the algorithm fails is shown as packed into the final bin
  $F$ and exceeding the capacity $18$, following the convention
  introduced after Definition~\ref{d:fc}.}
\label{fig:2}
\end{figure}

More observations are given in the next two lemmata.

\begin{lemma}
\label{l:aux}
\begin{compactenum}[(i)]
\item
\label{i:9}
Let $B$ be any bin before $F$. Then $s(B)>9$. Furthermore, if
$B\in\calE$ then $B$ contains at least two new items.
\item
\label{i:12}
Let $B,B',B''$ be any three bins in this order before than or equal to
$F$ and let $B''$ contain at least two new items. Then
$s(B)+s(B')+s(B'')>36+o$, where $o$ is the size of old items in $B''$.
\item  
\label{i:11}
Let $B$ be arbitrary and let $B'\in\calR$ be an arbitrary
bin after $B$ and before than or equal to $F$ in $\calL$. 

If $B'\neq\Rbar$ then $s(B)+s(B')>22$, in
particular $s(B)>11$ or $s(B')>11$.

If $B'=\Rbar$ then $s(B)+s(B')>21$.
\end{compactenum}
\end{lemma}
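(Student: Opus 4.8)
The engine for all three parts is First Fit together with the convention (introduced above) that the failing item $f$ has been placed into $F$: every new item lying in a bin at or before $F$ did not fit into any earlier bin at the time it was packed --- by normal First Fit on $\calL$ if it is non-huge, by reverse First Fit if it is the huge item $f$, which did not fit into any bin at all --- and, as bins only grow, it still does not fit into those earlier bins in the final configuration. For (\ref{i:9}), fix $B$ strictly before $F$; by Lemma~\ref{l:huge}(\ref{i:hspecial}) it precedes the final segment $\calH$, so by the observations just before this lemma every item ever packed into $B$ has size at most $9$. If $f$ is not huge, then $s(f)\in(6,9]$ and, since $f$ does not fit into $B$, $s(B)>18-s(f)\ge9$. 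If $f$ is huge, I use instead the non-huge new item $n\neq f$ of $F$ that ``did fit'' (recall that only $f$ can be huge in $F$): it was packed by normal First Fit, so it does not fit into $B$ either, and $s(n)\le9$ gives $s(B)>18-s(n)\ge9$. The ``furthermore'' part follows because a bin in $\calE$ enters the second phase empty, so its size is the total size of its new items, each of size at most $9$; hence $s(B)>9$ forces at least two of them.

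For (\ref{i:12}), write $s(B'')=o+a$, where $a$ is the total size of the new items of $B''$; by hypothesis there are at least two of them, so pick two, of sizes $n_1,n_2$, where in the case $B''=F$ with $f$ huge the second one is $f$, ensuring that the item of size $n_1$ is always non-huge. Then the item of size $n_1$, packed by normal First Fit, does not fit into the earlier bin $B$, so $s(B)+n_1>18$; and the item of size $n_2$ --- be it $f$ or a non-huge new item --- does not fit into the earlier bin $B'$, so $s(B')+n_2>18$. Summing these two inequalities and using $s(B'')\ge o+n_1+n_2$ yields $s(B)+s(B')+s(B'')>36+o$.

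For (\ref{i:11}), the bin $B'\in\calR$ is neither $L$ nor $M$ and lies at or before $F$, so it received a new item during the second phase; take a non-huge such item $n$ (if $B'=F$, the one that ``did fit''; otherwise $B'$ precedes $\calH$ and contains no huge item at all). Since $B$ precedes $B'$ and $n$ was packed by normal First Fit on $\calL$, it does not fit into $B$, i.e.\ $s(B)+s(n)>18$. By Lemma~\ref{l:1}(\ref{i1:tiny}) and the remark after Definition~\ref{d:blocks}, the old items of $B'$ have total size larger than $4$ if $B'\neq\Rbar$ and larger than $3$ if $B'=\Rbar$; since $n$ is among the new items of $B'$, this makes $s(B')$ larger than $4+s(n)$, respectively $3+s(n)$. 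Adding to $s(B)+s(n)>18$ gives $s(B)+s(B')>22$, respectively $>21$, and $s(B)>11$ or $s(B')>11$ follows immediately.

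The only genuinely delicate point is the bookkeeping around the huge item $f$ and reverse First Fit: one must always push past $B$ (or $B'$) a new item that was actually packed by normal First Fit on $\calL$, which is why in the one awkward case --- $f$ huge and the last of the bins under consideration equal to $F$ --- one falls back on a guaranteed non-huge new item of $F$ rather than on $f$. Everything else is the routine ``First Fit leaves no usable room behind it'' estimate.
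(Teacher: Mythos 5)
Your proof is correct and takes essentially the same approach as the paper: each part rests on the First-Fit property that a new item sitting in a bin at or before $F$ does not fit into any earlier bin, combined with the bound $s(n)\le 9$ for non-$f$ new items before $\calH$ and with the size of old items in regular bins. The case analysis you perform around the huge item $f$ and reverse First Fit is somewhat superfluous---since $f$ by definition does not fit into \emph{any} bin, the estimate $s(B)+s(n)>18$ holds regardless of which new item of the later bin is chosen, and the paper simply always uses the non-$f$ item that did fit---but the extra care does no harm.
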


\begin{proof}
$F$ contains a new item $n$ different from $f$. To prove (\ref{i:9}),
  note that $s(n)\leq 9$, and $n$ does not fit into $B$. It follows
  that if $B\in\calE$, then $B$ must contain at least two new items,
  as only items with size smaller than $9$ are packed before $F$.

To prove (\ref{i:12}), let $n,n'$ be two new items in $B''$ and note
that $s(B)+s(n)>18$ and $s(B')+s(n')>18$.

To prove (\ref{i:11}), observe that $B'$ has a new
item of size larger than $18-s(B)$, and it also has old items of size
at least $3$ or even $4$ if $B'\neq\Rbar$.
\qed
\end{proof}

\begin{lemma}
\label{l:cf} 
The critical bin $C$ is before $F$, there are at least two bins
between $C$ and $F$ and $C$ is not in the same block as $F$.
\end{lemma}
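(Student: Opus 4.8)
The plan is to prove the three assertions in the stated order, each time assuming the negation and deriving $s(\calL)>12|\calL|$; this is impossible, because the $m-|\calL|$ bins outside $\calL$ are exactly the complete bins, each of size at least $12$, so $s(\calL)>12|\calL|$ would push the total size of all items above $12m$, contradicting the guarantee. Throughout I would use only facts already established: every bin strictly before $F$ has size larger than $9$ (Lemma~\ref{l:aux}(\ref{i:9})) and every bin strictly before $C$ has size larger than $12$ (definition of $C$); $s(F)>18$; $\calH$ is a final segment of $\calL$ with $s(\calH)\ge 12|\calH|-1$ always and $s(\calH)\ge 12|\calH|$ whenever $\Rbar\notin\calH$ (Lemma~\ref{l:huge}(\ref{i:hspecial}),(\ref{i:h}),(\ref{i:hfirst})); and the inequalities of Lemma~\ref{l:aux}(\ref{i:12}),(\ref{i:11}).

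First, \emph{$C$ is before $F$}. Since $s(F)>18$ we have $C\ne F$, so if $C$ were not before $F$ it would lie in $\calH$, and then every bin up to and including $F$ would have size larger than $12$. Writing $p$ for the number of bins preceding $F$ (so $|\calL|=p+1+|\calH|$), summing yields $s(\calL)>12p+18+(12|\calH|-1)=12|\calL|+5$, a contradiction.

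Second, \emph{there are at least two bins between $C$ and $F$}. Suppose there is at most one. If there is none, $C$ and $F$ are consecutive, and $s(C)>9$, $s(F)>18$, $s(\calH)\ge 12|\calH|-1$ already give $s(\calL)>12|\calL|+2$. If exactly one bin $D$ lies between them, I split on whether $\Rbar\in\calH$. If $\Rbar\notin\calH$, then $s(\calH)\ge 12|\calH|$, and applying Lemma~\ref{l:aux}(\ref{i:12}) to the three consecutive bins $C,D,F$ (recall $F$ holds two new items) gives $s(C)+s(D)+s(F)>36$, hence $s(\calL)>12|\calL|$. If $\Rbar\in\calH$, I would argue that this is very restrictive: since $\calH$ is a final segment and $\Rbar$ is essentially the first bin of the block part of $\calL$, $\Rbar\in\calH$ forces $\lambda=0$ (otherwise $\calH$ would contain all of $\calE\cup\calR$, contradicting $\calH\subsetneq\calE\cup\calR$), so $\Rbar=R_{2,1}$, $F=E_1$, and $\calH$ consists exactly of the bins after $E_1$, i.e.\ of the bins of the blocks $\calB_2,\dots,\calB_{e+1}$; that number is $4(e-1)+3=4e-1\equiv 3\pmod 4$, so $\hmodfour=3$ and Lemma~\ref{l:huge}(\ref{i:hfirst}) upgrades to $s(\calH)\ge 12|\calH|+2$. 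Combined with $s(C)+s(D)+s(E_1)>36$ (Lemma~\ref{l:aux}(\ref{i:12}), $E_1$ having no old item), this gives $s(\calL)>12|\calL|+2$. Every case is a contradiction.

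Third, \emph{$C$ and $F$ lie in different blocks}. If they shared a block, then by the previous point that block would contain at least four bins, so it would be a full block $R_{i,1},R_{i,2},R_{i,3},E_i$ (a middle block, or $\calB_1$ with $\lambda=3$), and necessarily $C=R_{i,1}$ and $F=E_i$. Then $\calH$ is the set of bins after $E_i$, starting with $R_{i+1,1}$, which is strictly after $\Rbar$; hence $\Rbar\notin\calH$ and $s(\calH)\ge 12|\calH|$. Now $s(R_{i,1})+s(R_{i,2})>22$ by Lemma~\ref{l:aux}(\ref{i:11}) (as $R_{i,2}\in\calR\setminus\{\Rbar\}$ precedes $F$), while $s(R_{i,3})>9$ and $s(E_i)=s(F)>18$, so the four bins of the block have total size larger than $49$; adding the bins before $C$ (each larger than $12$) and $s(\calH)\ge 12|\calH|$ gives $s(\calL)>12|\calL|+1$, a contradiction. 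The step I expect to be the real obstacle is the sub-case $\Rbar\in\calH$ in the second part: there the plain volume count falls one unit short, and the only way I see to close it is the structural observation that $\Rbar\in\calH$ pins down $F=E_1$ and forces $|\calH|\equiv 3\pmod 4$, which is exactly what strengthens the bound on $s(\calH)$ enough.
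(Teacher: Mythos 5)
Your proof is correct and follows essentially the same route as the paper: the same case decomposition ($C$ at or after $F$, adjacent to $F$, one bin between, same block), the same tools (Lemma~\ref{l:aux}, Lemma~\ref{l:huge}, size of bins before $C$), and the same contradiction $s(\calL)>12|\calL|$. The only deviations are cosmetic and valid: in the one-bin-between case you split on $\Rbar\in\calH$ (deriving $\lambda=0$, $F=E_1$, $\hmodfour=3$) where the paper splits on $F\in\calE$ versus $F\in\calR$, and in the same-block case you pair $C$ with $R_{i,2}$ via Lemma~\ref{l:aux}(\ref{i:11}) instead of invoking $\hmodfour=3$ for $\calH$.
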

\begin{proof}
All bins before $C$ have size larger than $12$. Using
Lemma~\ref{l:huge} we have
$$s(F)+s(\calH)>18+12|\calH|-1=12(|\calH|+1)+5\,.$$ 
It remains to bound the sizes of the other bins. Note that $F\neq C$ as $s(F)>18$.
If $C$ is after $F$,
all bins before $F$ have size more than $12$, so all together
$s(\calL)>12|\calL|+5$, a contradiction. If $C$ is just before $F$,
then by Lemma~\ref{l:aux}(\ref{i:9}),
$s(C)>9=12-3$ and the total size of bins in $s(\calL)>
12|\calL|+5-3>12|\calL|$, a contradiction.

If there is a single bin $B$ between $C$ and $F$, then $s(C)+s(B)$
plus the size of two new items in $F$ is more than $36$ by
Lemma~\ref{l:aux}(\ref{i:12}). If $F\in\calE$ then $\calH$ starts with
three bins in $\calR$, thus $s(\calH)\geq12|\calH|+2$ using 
Lemma~\ref{l:huge} with $\hmodfour=3$, and we get a
contradiction. If $F\in\calR$ then $\Rbar\notin\calH$, thus
Lemma~\ref{l:huge} gives $s(\calH)\geq12|\calH|$, and we get a
contradiction as well.

The last case is when $C$ and $F$ are in the same block with two bins
between them. Then $F\in\calE$, so $\hmodfour=3$, and $C$ is the first
bin of the three other 
bins from the same block, so $\Rbar\notin\calH$. Then $s(C)>9$, the
remaining two bins together with $F$ have size more than $36$ by
Lemma~\ref{l:aux}(\ref{i:12}) and we use $s(\calH)\ge 12|\calH|+3$ from
Lemma~\ref{l:huge} to get a contradiction.
\qed
\end{proof}
We now partition $\calL$ into several parts (see Figure~\ref{fig:2} for an illustration):

\begin{definition}~
\begin{compactitem}
\item Let $\calF=\calB_i\cup\calH$, where $F\in\calB_i$.
\item Let $\calD$ be the set of all bins after $C$ and before $\calF$.
\item Let $\calC$ be the set of all bins before and including $C$.
\end{compactitem}
\end{definition}

Lemma~\ref{l:cf} shows that the
parts are non-overlapping. We analyze the weight of the parts
separately, essentially block by block.
Recall that a weight of a bin is defined as $w(A)=s(A)+k(A)-13$,
where $k(A)$ is the number of large and huge items packed in $A$.
The proof is relatively
straightforward if $C$ is not special (and thus also
$F\not\in\calB_1$), 
which is the most important case driving our choices for $w$. 
A typical block has nonnegative weight, we gain more
weight in the block of $F$ which exactly compensates the loss of
weight in $\calC$, which occurs mainly in $C$ itself. 

Let us formalize and prove the intuition stated in the previous
paragraph in a series of three lemmata.

\begin{lemma}
\label{l:f} 
If $F$ is not in the first block then $w(\calF)>5$, otherwise $w(\calF)>4$.
\end{lemma}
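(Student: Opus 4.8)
The plan is to write $\calF=\calB_i'\cup\calH$, where $\calB_i':=\calB_i\setminus\calH$ consists of the bins of the block $\calB_i\ni F$ from the start of the block up to and including $F$, to bound $w(\calB_i')$ and $w(\calH)$ separately, and to add the two estimates. First I would dispose of $\calH$: each of its bins contains a huge item, so $k(\calH)\ge|\calH|$ and therefore $w(\calH)=s(\calH)+k(\calH)-13|\calH|\ge s(\calH)-12|\calH|$; combining this with Lemma~\ref{l:huge}(\ref{i:h})--(\ref{i:hfirst}) gives $w(\calH)\ge\hmodfour\ge0$ when $\Rbar\notin\calH$ and $w(\calH)\ge\hmodfour-1$ when $\Rbar\in\calH$. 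Since $\Rbar$ is $R_{1,1}$ (if $\lambda>0$) or $R_{2,1}$ (if $\lambda=0$), it is before or equal to $F$ in $\calL$ unless $\lambda=0$ and $F=E_1$; thus $\Rbar\in\calH$ forces $F$ into the first block, and in that lone configuration one checks that $|\calH|=4e-1$, i.e., $\hmodfour=3$, so still $w(\calH)\ge2$. In all other cases $w(\calH)\ge\hmodfour\ge0$.

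Next I would bound $w(\calB_i')$, which equals $s(\calB_i')+k(\calB_i')-13p$ with $p=|\calB_i'|$. By Lemma~\ref{l:cf} the critical bin $C$ lies in a strictly earlier block than $F$ (or is a special bin), so every bin of $\calB_i$ is strictly after $C$, and hence all of its new items other than $f$ are large. Consequently $F$ satisfies $s(F)>18$ (we placed $f$ into it, exceeding its capacity) and $k(F)\ge2$ (its forced non-$f$ new item is large and $f$ itself is large or huge), while every other bin $B\in\calB_i'$ is a regular bin strictly after $C$, so $s(B)>9$ by Lemma~\ref{l:aux}(\ref{i:9}) and $k(B)\ge1$ because $B$ contains a large new item. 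Moreover, whenever $\calB_i'$ contains two regular bins before $F$, these are the consecutive bins $R_{i,1},R_{i,2}$ with $R_{i,2}\ne\Rbar$, so Lemma~\ref{l:aux}(\ref{i:11}) gives $s(R_{i,1})+s(R_{i,2})>22$.

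Then I would finish with a short case analysis on the position of $F$ inside its block: $p\in\{1,2,3,4\}$ according to whether $F$ is $R_{i,1}$, $R_{i,2}$, $R_{i,3}$, or the empty bin of the block, and since the bins of $\calH$ are exactly the bins of $\calB_i$ after $F$ together with all subsequent blocks (whose cardinalities are $4,\dots,4,3$), a direct count gives $\hmodfour=2,1,0,3$ for $p=1,2,3,4$ in a typical block, with the obvious adjustments in the first, last, and single-block cases. Substituting the per-bin bounds above---and the pair bound $s(R_{i,1})+s(R_{i,2})>22$ for $p\in\{3,4\}$---into $w(\calB_i')=s(\calB_i')+k(\calB_i')-13p$ yields $w(\calB_i')>7,\,4,\,5,\,2$ for $p=1,2,3,4$, hence $w(\calF)=w(\calB_i')+w(\calH)>9,\,5,\,5,\,5$; so $w(\calF)>5$ whenever $F$ is not in the first block. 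When $F$ is in the first block, $\calB_1$ carries only $\lambda$ regular bins so $p$ may be smaller, but the same estimates (now also using $w(\calH)\ge2$ in the case $\Rbar\in\calH$) still give $w(\calF)>4$, which is all that is claimed there.

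I expect the main obstacle to be the bookkeeping around the degenerate configurations ($e=0$, small $\lambda$, or few special bins) rather than any single inequality: these change the cardinalities of $\calB_i'$ and of the last block and break the clean correspondence between $p$ and $\hmodfour$, and there is the lone case $\Rbar\in\calH$ in which the estimate for $w(\calH)$ loses a unit. Both are kept under control by the two structural facts established above---$\calB_i$ lies entirely after $C$, so the uniform per-bin bounds apply, and $\Rbar\in\calH$ occurs only for $\lambda=0$ and $F=E_1$, where $\hmodfour=3$ exactly compensates for the lost unit.
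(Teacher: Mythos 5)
Your proof is correct and follows essentially the same route as the paper: the same decomposition $\calF=(\calB_i\setminus\calH)\cup\calH$, the same use of Lemma~\ref{l:huge} for $\calH$, and the same per-bin bounds from Lemma~\ref{l:aux} (with the same case split on the position of $F$ in its block). The only cosmetic difference is that the paper first observes $k(\calF)\geq|\calF|+1$ and then bounds $s(\calF)$ directly, whereas you split off $w(\calH)$ and $w(\calB_i\setminus\calH)$ and add the two weight bounds, which yields marginally sharper constants in some subcases but no new idea.
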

\begin{proof}
All the new items in bins of $\calF$ are large or huge. Each bin has a
new item and the bin $F$ has two new items. Thus $k(\calF)\geq
|\calF|+1$. All that remains is to show that $s(\calF)>12|\calF|+3$,
and $s(\calF)>12|\calF|+4$ if $F$ is not in the first block.

If $F$ is the first bin in a block, the lemma follows as $s(F)>18$ and
$s(\calH)\geq12|\calH|-1$, thus $s(\calF)=s(F)+s(\calH)>12|\calF|+5$.

In the remaining cases there is a bin in $\calR\cap\calF$ before $F$.
Lemma~\ref{l:huge} gives $s(\calH)\geq12|\calH|$; moreover, if
$F\in\calE$, then $s(\calH)\geq12|\calH|+3$.

If $F$ is preceded by three bins from $\calF\cap\calR$, then
$F\in\calE$ and thus $s(\calH)\geq12|\calH|+3$. Using
Lemma~\ref{l:aux}(\ref{i:11}) twice, two of the bins in
$\calF\cap\calR$ before $F$
have size at least $11$ and using Lemma~\ref{l:aux}(\ref{i:9}) the
remaining one has size $9$. Thus the size of these four bins is more
than $11+11+9+18=4\cdot12+1$, summing with the bound for $\calH$ we get
$s(\calF)>12|\calF|+4$. 

If $F$ is preceded by two bins from $\calF\cap\calR$, then by
Lemma~\ref{l:aux}(\ref{i:9}) the total size of these two bins and two
new items in $F$ is more than $36$. If $F\in\calR$, the size of old
items in $F$ is at least $4$ and with $s(\calH)\geq12|\calH|$ we get
$s(\calF)>12|\calF|+4$. 
If $F\in\calE$, which also implies that $F$ is in the first block, then
$s(\calH)\geq12|\calH|+3$, thus $s(\calF)>12|\calF|+3$. 

If $F$ is preceded by one bin $R$ from $\calF\cap\calR$, then let $n$ be a
new item in $F$ different from $f$. We have $s(R)+s(n)>18$ and
$s(f)>6$. We conclude the proof as in the previous case.
\qed
\end{proof}

\begin{lemma}
\label{l:c} ~

If $C\in\calR$ then $w(\calC)\geq -6$. 

If $C\in\calE$ then $w(\calC)\geq -5$. 

If $C$ is a special bin then $w(\calC)\geq -4$.
\end{lemma}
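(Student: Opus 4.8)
The goal is to bound $w(\calC)=\sum_{B\in\calC}w(B)=\sum_{B\in\calC}\bigl(s(B)+k(B)-13\bigr)$ from below. Two elementary facts drive the estimate: every bin strictly before $C$ has $s(B)>12$ by the definition of the critical bin, hence $w(B)>-1$; and $C$ itself satisfies $s(C)>9$ by Lemma~\ref{l:aux}(\ref{i:9}) (it lies before $F$), hence $w(C)>-4$. When $C$ is a special bin, $\calC$ is a prefix of the list $L,M,T$ of length at most three, and an $L$ occurring before $C$ carries its old large item so that $w(L)>0$; a direct check of the few possibilities for which of $L,M,T$ plays the role of $C$ then yields $w(\calC)\ge-4$.

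In the main case $C$ lies in some block $\calB_j$, so that every special bin and every bin of $\calB_1,\dots,\calB_{j-1}$ is strictly before $C$. I would split $w(\calC)$ into the contributions of the special bins, of the complete preceding blocks $\calB_1,\dots,\calB_{j-1}$, and of the initial part of $\calB_j$ up to and including $C$, and argue: (a) every complete block preceding $C$ has nonnegative weight; (b) the special bins contribute at least a small constant; (c) the initial part of $\calB_j$ contributes at least $-6$, resp.\ $-5$, resp.\ $-4$, with essentially all of the deficit sitting in $C$. For (a) the key observation is that all new items placed into the empty bin $E_i$ of a block were rejected by its three regular bins (which precede $E_i$ in $\calL$), so if $\mu$ denotes the smallest size of a new item of $E_i$ then $s(R_{i,t})>18-\mu$ for $t=1,2,3$; moreover $E_i$ holds at least three new items unless one of them is large, because two items of size at most $6$ cannot push a bin above size $12$. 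In the regime where $E_i$ has no large new item this gives $\sum_{t}w(R_{i,t})>3(5-\mu)$ together with $w(E_i)\ge\max\{3\mu,12\}-13$, whose sum is at least $2$; in the regime where $E_i$ has a large new item one instead uses $k(E_i)\ge1$ together with the triple-sum bound of Lemma~\ref{l:aux}(\ref{i:12}) and the regular-bin pair bound of Lemma~\ref{l:aux}(\ref{i:11}) inside the block. For (b) and (c) the new items inside $C$, and inside the earlier bins of $\calB_j$, were likewise rejected by the bins preceding them, which again forces the size/count trade-off needed to keep the total loss within the claimed bound; the three stated values $-6,-5,-4$ differ only in how many regular bins of $\calB_j$ sit before $C$ and in whether $C$ itself carries old items.

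The hard part will be (a) in the sub-case where the empty bin $E_i$ carries large new items but only barely exceeds size $12$: there the $18-\mu$ argument is useless, since a large item forces only $s(R_{i,t})>18-9=9$, weaker than the $12$ we already know, so the missing weight must be extracted purely from $k(E_i)$ together with the triple-sum bound of Lemma~\ref{l:aux}(\ref{i:12}) and the pair bound of Lemma~\ref{l:aux}(\ref{i:11}), and one has to check that the constants still close at $\ge0$. A secondary, more tedious obstacle is the bookkeeping for the first block $\calB_1$, which may contain fewer than three regular bins or only the bin $E_1$, and for the distinguished bin $\Rbar$, for which the pair bound of Lemma~\ref{l:aux}(\ref{i:11}) weakens by $1$; these degeneracies are exactly what prevents the bound from being stated more tightly than $-6$/$-5$/$-4$.
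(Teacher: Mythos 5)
Your decomposition is genuinely different from the paper's: you split $\calC$ block-by-block, whereas the paper introduces the set $\Cprime$ of all bins strictly before $C$ with $w(B)\le 0$, observes that any such $B$ has $s(B)\in(12,13]$ and no large item, and then proves the \emph{global} structural fact that $|\Cprime|\le1$ or $\Cprime=\{M,T\}$. Combined with $w(B)>-1$ for every bin before $C$ and $w(C)\ge -4$ (or $\ge -3$ in the two delicate subcases), this gives the lemma in a few lines with no per-block bookkeeping at all.

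Your step (a), that every complete block $\calB_i$ lying strictly before $C$ has $w(\calB_i)\ge0$, is false, and this is a real gap, not just a missing computation. The bad regime is exactly the one you flag as ``the hard part'': $E_i$ carries a large new item and is barely above size $12$. Concretely, suppose $R_{i,1}$ has old size $4.01$ and two new items of size $4$, so $s(R_{i,1})=12.01$, $k(R_{i,1})=0$, $w(R_{i,1})=-0.99$. Every later item must then exceed $5.99$. Give $R_{i,2}$ and $R_{i,3}$ old size $6$ and one new large item of size $6.01$ each: $s=12.01$, $k=1$, $w=0.01$. Give $E_i$ two new items of sizes $6.01$ and $5.99999$: $s\approx12.01$, $k=1$, $w\approx0.01$. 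All First-Fit rejection constraints and the bounds of Lemma~\ref{l:aux}(\ref{i:11}),(\ref{i:12}) are satisfied (the triple sum is $\approx 36.03>36$, pairs are $\approx24>22$), yet $w(\calB_i)\approx-0.96<0$. The reason your constants ``don't close'' is structural: the triple-sum and pair bounds inside a block give only $s(\calB_i)>48$ together with $k(\calB_i)\ge1$, i.e. $w(\calB_i)>-3$, and there is nothing local that forces it up to $0$. What is true, and what the paper exploits, is that at most one regular bin in \emph{all} of $\calC$ can be in this deficient state (a second one would need new items of size $>5$ in a bin of size $\le13$ with old content $>3$, which overshoots). So a block-by-block argument can only be salvaged by carrying a global ``at most one deficient block, and it costs less than $1$'' invariant, which your proposal does not do and which would essentially reconstruct the paper's $\Cprime$ argument in disguise.

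Your treatment of the case where $C$ is a special bin is fine and matches the paper's case analysis in spirit (using $w(L)>0$ when $L$ precedes $C$, $s(C)>9$, and a short check for $C=T$ with $M\in\Cprime$).
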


\begin{proof}
For every bin $B$ before $C$, $s(B)>12$ and thus $w(B)>-1$ by the
definition of $C$. Let $\Cprime$ be the set of all bins $B$ before $C$ with
$w(B)\leq0$. This implies that for $B\in\Cprime$, $s(B)\in(12,13]$
and $B$ has no large item. It follows that any new item in any
bin after the first bin in $\Cprime$ has size more than $5$.
We have 
\begin{equation}
\label{eq:wcalC}
w(\calC)\geq w(\Cprime)+w(C)\geq-|\Cprime|+w(C)\,.
\end{equation} 

First we argue that either $|\Cprime|\leq 1$ or $\Cprime=\{M,T\}$.
Suppose that $|\Cprime|>1$, choose $B,B'\in\Cprime$ so that $B$ is before
$B'$.  If $B'\in\calE$, either $B'$ has at most two (new) non-large items and
$s(B')\leq 6+6= 12$, or it has at least three items and
$s(B')>5+5+5=15$; both options are impossible for $B'\in\Cprime$.
If $B'\in\calR$, it has old items of total size in
$(3,6]$. Either $B'$ has a single new item and $s(B')\leq 6+6= 12$, or
it has at least two new items and $s(B')>3+5+5=13$; both options
are impossible for $B'\in \Cprime$. 
The only remaining
option is that $B'$ is a special bin. Since $L$ has a large item,
$L\not\in\Cprime$ and $\Cprime=\{M,T\}$.

By Lemma~\ref{l:aux}(\ref{i:9}), we have $w(C)\geq -4$. The lemma
follows by summing with (\ref{eq:wcalC}) in the following three cases: (i)
$C\in\calR$, (ii) if $\Cprime=\emptyset$ and also (iii) if
both $C\in\calE$ and $|\Cprime|=1$.

For the remaining cases, (\ref{eq:wcalC}) implies that it is
sufficient to show $w(C)\geq -3$. 
If $C\in\calE$ and $\Cprime=\{M,T\}$ then $C$ contains two new items of
size at least 5, thus $w(C)\geq -3$.
If $C=T$ and $\Cprime=\{M\}$ then $C$ either has a large item, or it
has two new items: otherwise it would have size at most 3 of old items
plus at most 6 from a single new item, total of at most 9,
contradicting Lemma~\ref{l:aux}(\ref{i:9}). Thus $w(C)\geq -3$ in this
case as well.
\qed
\end{proof}

\begin{lemma}
\label{l:d} 
\begin{compactenum}[\rm(i)]
\item 
For every block $\calB_i\subseteq\calD$ we have $w(\calB_i)\geq 0$.
\item 
If there is no special bin in $\calD$, then $w(\calD)\geq 0$.
If also $C\in\calR$ then $w(\calD)\geq 1$.
\end{compactenum}
\end{lemma}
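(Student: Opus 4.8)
The plan is to prove part~(i) first and then deduce part~(ii) from it. Throughout I would use the structural facts already established: every new item lying in a bin of $\calD$ is large, so it adds more than $6$ to that bin's size and exactly $1$ to its $k$-count; a regular or empty bin of $\calL$ contains no old large or huge item, so for such a bin $B\in\calD$ the number $k(B)$ equals the number of new items in $B$; every regular bin of $\calL$ contains a new item; and every empty bin before $F$ contains at least two new items (Lemma~\ref{l:aux}(\ref{i:9})).

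For part~(i), I would fix a block $\calB_i\subseteq\calD$, set $N=k(\calB_i)$, and dispose of the degenerate block $\{E_1\}$ (case $\lambda=0$) at once: $w(\calB_i)=s(E_1)+k(E_1)-13>12+2-13=1$. Otherwise $\calB_i$ has its empty bin $E_i$ together with $\lambda'\in\{1,2,3\}$ regular bins, so $N\ge\lambda'+2$, and the crude estimate $s(\calB_i)>(\text{old volume of }\calB_i)+6N$ already gives $w(\calB_i)>0$ as soon as $N\ge\lambda'+3$. The real work is the tight case $N=\lambda'+2$, where each regular bin of $\calB_i$ holds a single new item $x_p$ and $E_i$ holds exactly two new items $y_1,y_2$. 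Here I would run a timing argument: each $x_p$ must arrive before both $y_1$ and $y_2$, since otherwise at the arrival of $y_k$ the bin $R_{i,p}$ still contains only old items (of total size $\le6$), into which the large item $y_k\le9$ fits — contradicting First Fit. Hence when $y_k$ is placed into $E_i$, past all the regular bins of $\calB_i$ in $\calL$, every such bin already has its final size, so $s(R_{i,p})>18-y_k$; setting $t=\max_k(18-y_k)\in[9,12)$ gives $s(R_{i,p})>t$ for all $p$ and $s(E_i)=y_1+y_2\ge36-2t$. For $\lambda'\le2$ this alone yields $s(\calB_i)>\lambda' t+(36-2t)$ and then $w(\calB_i)>1$. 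For $\lambda'=3$ I would additionally use the pairing bounds $s(R_{i,1})+s(R_{i,2})>22$ and $s(R_{i,2})+s(R_{i,3})>22$ (Lemma~\ref{l:aux}(\ref{i:11}); the trailing bin of each pair is never $\Rbar$), and split on whether $t\le11$: in both regimes one gets $s(R_{i,1})+s(R_{i,2})+s(R_{i,3})>\max(3t,\,22+t)$, hence $s(\calB_i)>47$, and since $k(\calB_i)=5$ here, $w(\calB_i)>47+5-52=0$.

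For part~(ii), assume $\calD$ has no special bin. If $C\in\calE$, or $C$ is special — in which case $C$ must be the last existing special bin, since $M$ and $T$ (if present) would otherwise lie after $C$ and before $\calF$, i.e.\ inside $\calD$ — then $\calD$ is a disjoint union of entire blocks, so part~(i) gives $w(\calD)\ge0$ and the extra claim is vacuous. If $C\in\calR$, write $C=R_{j,l}$: Lemma~\ref{l:cf} forces $F\notin\calB_j$, so with $F\in\calB_i$ we get $j<i$; in particular $j\le e$ and the empty bin $E_j$ exists, and $\calD=\mathcal{P}\cup\calB_{j+1}\cup\cdots\cup\calB_{i-1}$ with $\mathcal{P}=\{R_{j,l+1},\dots,R_{j,3},E_j\}$ the part of $\calB_j$ after $C$. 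Every whole block there contributes $w\ge0$ by part~(i), so it remains to prove $w(\mathcal{P})\ge1$; this is the same timing argument applied to $E_j$ and the at most two regular bins of $\mathcal{P}$: in the tight case each such regular bin has size $>t$ and $s(E_j)\ge36-2t$ with $t\in[9,12)$, and distinguishing $l=1,2,3$ gives $w(\mathcal{P})>1$. Summing, $w(\calD)\ge1$.

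The hard part is the tight case $N=5$ of part~(i): volume together with the pairing inequalities by themselves leave $w(\calB_i)$ as low as just above $-1$, and the only way I see to recover the missing unit is the timing argument — the observation that the two items filling $E_i$ necessarily arrive late, and therefore certify that all three regular bins of the block were already nearly full at that moment. Keeping the bookkeeping of $t$, the two pairing inequalities, and the $\Rbar$ exceptions mutually consistent is the one place that needs care; everything else is routine arithmetic.
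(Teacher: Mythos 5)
Your proof is correct, but you take a genuinely different route from the paper in the block analysis. The paper handles the critical case $\lambda'=3$ (a full block with $k=5$) in one stroke: choose a pair of regular bins with the larger one not equal to $\Rbar$, apply Lemma~\ref{l:aux}(\ref{i:11}) to conclude one regular bin has size $>11$, and then apply Lemma~\ref{l:aux}(\ref{i:12}) to the remaining two regular bins together with $E_i$ to get $>36$, hence $s(\calB_i)>47$. You instead derive the parametrized bound $s(R_{i,p})>t$ for \emph{all} $p$ and $s(E_i)\ge 36-2t$ with $t=\max_k(18-s(y_k))$, combine with the pairing inequalities, and split on whether $t\le 11$. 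Both reach $s(\calB_i)>47$, but the paper's is shorter. Two remarks. First, your timing argument (that each $x_p$ arrives before both $y_k$) establishes a strictly stronger fact than you actually need: the estimate $s(R_{i,p})>18-s(y_k)$ already follows from First Fit plus the monotonicity of bin sizes, since $s(R_{i,p})$ at the moment $y_k$ is packed is at most the final $s(R_{i,p})$, and $y_k$ failed to fit. So the arrival-ordering observation, while true and a good intuition for why the weight works out, is not logically required. Second, your claim that the crude estimate $s(\calB_i)>\text{(old volume)}+6N$ settles $N\ge\lambda'+3$ is correct only with the sharper old-volume bound $>4\lambda'-1$ (from Lemma~\ref{l:1}(\ref{i1:tiny}): at most one regular bin has old size $\le 4$); the naive $>3\lambda'$ fails for $\lambda'=3$, $N=6$, so you should flag that you are using the sharper bound there. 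The reduction of part~(ii) to the partial tail $\mathcal{P}$ of $C$'s block is the same decomposition the paper uses, and your bound $w(\mathcal{P})>1$ matches the paper's $w(\calB)\ge 1$ for incomplete blocks.
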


\begin{proof}
First we claim that for each block $\calB_i\subseteq\calD$ with three
bins in $\calR$, we have 
\begin{equation}
\label{eq:wB0}
w(\calB_i)\geq 0\,.
\end{equation} 
By Lemma~\ref{l:aux}(\ref{i:11}), one of the bins in $\calR\cap\calB_i$
has size at least 11. By Lemma~\ref{l:aux}(\ref{i:12}), the remaining
three bins have size at least 36. We get (\ref{eq:wB0}) by observing that
$k(\calB_i)\geq 5$, as all the new items placed after $C$ and
before $F$ are large, each bin
contains a new item and $E_i$ contains two new items.

Next, we consider an incomplete block, that is,
a set of bins $\calB$ with at most two bins
from $\calR\cap\calD$ followed by a bin $E\in\calE\cap\calD$.
We claim 
\begin{equation}
\label{eq:wcalB}
w(\calB)\geq 1\,. 
\end{equation}
The bin $E$ contains two large items, since it is after $C$. In particular,
$w(E)\geq 1$ and (\ref{eq:wcalB}) follows if $|\calB|=1$. If $|\calB|=2$, the
size of one item from $E$ plus the previous bin is more than 18, the
size of the other item is more than 6, thus $s(\calB)\geq 24$; since
$k(B)\geq 3$, (\ref{eq:wcalB}) follows.  If $|\calB|=3$, by
Lemma~\ref{l:aux}(\ref{i:12}) we have $s(\calB)\geq 36$; $k(\calB)\geq
4$ and (\ref{eq:wcalB}) follows as well.

By definition, $\calD$ ends by a bin in $\calE$ (if nonempty). Thus
the lemma follows by using (\ref{eq:wcalB}) for the incomplete block,
i.e., for $C\in\calR$ or for $\calB_1$ if it does not have three bins
in $\calR$, and adding (\ref{eq:wB0}) for all the remaining blocks.
Note that $C\in\calR$ implies $\calD\not=\emptyset$.
\qed
\end{proof}

We are now ready to derive the final contradiction. 

If $\calD$ does not contain a special bin, we add the
appropriate bounds from Lemmata~\ref{l:c}, \ref{l:d} and~\ref{l:f}.
If $C\in\calR$ then $F$ is not in the first block and
$w(\calL)=w(\calC)+w(\calD)+w(\calF)>-6+1+5=0$.  If $C\in\calE$ then
$F$ is not in the first block and
$w(\calL)=w(\calC)+w(\calD)+w(\calF)>-5+0+5=0$.  If $C$ is the last
special bin then $w(\calL)=w(\calC)+w(\calD)+w(\calF)>-4+0+4=0$. In
all subcases $w(\calL)>0$, a contradiction.

The rest of the proof deals with the remaining case when $\calD$ does
contain a special bin. This implies there are at least two special
bins and $C$ is not the last special bin. Since $T$ is always the last
special bin (if it exists), it must be the case that $C\neq T$ and
thus $C=L$ or $C=M$.  We analyze the special bins together with the
first block, up to $F$ if $F$ belongs to it.  First observe that the
only bin possibly before $C$ is $L$ and in that case $w(L)\ge0$, so
$w(\calC)\geq w(C)$.

Let $A$ denote $F$ if $F\in\calB_1$ or $E_1$ if $F\not\in\calB_1$.  As
$A=F$ or $A\in\calE$, we know that $A$ contains at least two new
items; denote two of these new items by $n$ and $n'$. Since $A$ is
after $C$, we know that both $n$ and $n'$ are large or huge.

Let $\calA$ be the set
containing $C$ and all bins between $C$ and $A$, not including
$A$. Thus $\calA$ contains two or three special bins followed by at
most three bins from $\calR$. We have $k(\calA)\geq|\calA|-1$ as each
bin in $\calA$ contains a large item, with a possible exception of
$C$ (if $C=M$).  Furthermore $k(A)\geq2$. The bound on $k(\calA)$ and
$k(A)$ imply that
\begin{equation}
\label{eq:ws}
w(\calA)+w(A)\geq s(\calA)+s(A)-12|\calA|-12   
\end{equation}
and thus it is sufficient to bound $s(\calA)+s(A)$. 

The precise bound we need depends on what bin $A$ is.  In each case,
we first determine a sufficient bound on $s(\calA)+s(A)$ and argue
that it implies contradiction. Afterwards we prove the bound.
Typically, we bound the size by creating pairs of bins of size $21$ or
$22$ by Lemma~\ref{l:aux}(\ref{i:11}). We also use that $s(B)>9$ for
any $B\in\calA$ by Lemma~\ref{l:aux}(\ref{i:9}) and that $n$, $n'$
together with any two bins in $\calA$ have size at least $36$ by
Lemma~\ref{l:aux}(\ref{i:12}).

\mycasesp{Case $A\neq F$:} 
Then $F\not\in\calB_1$ and $A=E_1$. We claim that 
\begin{equation}
\label{eq:AneqF}
s(\calA)+s(A)\geq12|\calA|+7\,.
\end{equation}
First we show that~(\ref{eq:AneqF}) implies a contradiction. Indeed, 
(\ref{eq:AneqF}) together with (\ref{eq:ws}) yields
$w(\calA)+w(A)\geq -5$ and summing this with all the other bounds,
namely $w(\calF)>5$ from Lemma~\ref{l:f} and $w(\calB_i)\ge0$ for
whole blocks $\calB_i\in\calD$ from Lemma~\ref{l:d}, leads to
$w(\calL)>0$, which is a contradiction.  

Now we prove~(\ref{eq:AneqF}). The items $n$ and $n'$ from $A$
together with the first two special bins in $\calA$ have size more
than 36. Let $\Aprime$ be the set of the remaining bins; it contains
possibly $T$ and at most three bins from $\calR$.  It remains to show
$s(\Aprime)\geq 12|\Aprime|-5$.

For $|\Aprime|=0$ it holds trivially.  

If $|\Aprime|=1$, the only bin in $\Aprime$ has size more than
9 and this is sufficient. 

For $|\Aprime|>1$ we apply Lemma~\ref{l:aux}(\ref{i:11}) and pair as
many bins from $\Aprime$ as possible; note that all the bins in
$\Aprime$ except possibly $T$ are in $\calR$, so the assumptions of
the lemma hold.  If $|\Aprime|=2$, then
$s(\Aprime)>21=2\cdot12-3$. For $|\Aprime|=3$ we get
$s(\Aprime)>22+9=3\cdot12-5$, since we can create a pair without
$\Rbar$.  Finally, if $|\Aprime|=4$ then $s(\Aprime)>22+21=4\cdot
12-5$.

\mycasesp{Case $A=F$:} 
We claim that it is sufficient to prove
\begin{equation}
\label{eq:AF}
s(\calA)+s(n)+s(n')>12|\calA|+\begin{cases}
8 & \mbox{if $F\in\calR$ and $\Rbar\in\calA$,} \\
9 & \mbox{if either $F\in\calR$ or $\Rbar\in\calA$,} \\
10 & \mbox{in all cases.}
\end{cases}
\end{equation}
First we show that~(\ref{eq:AneqF}) implies a contradiction. 
 
If $F=E_1$ we note that $\hmodfour=3$ (as $\calH$ starts with $3$ bins
in $\calR$). Thus Lemma~\ref{l:huge}, items (\ref{i:h}) and
(\ref{i:hfirst}), together with $w(\calH)\geq s(\calH)-12|\calH|$
yields $w(\calH)\geq 3$ for $\Rbar\in\calA$ or $w(\calH)\geq 2$ for
$\Rbar\not\in\calA$. Summing this with $w(\calA)+w(A)>-3$ or
$w(\calA)+w(A)>-2$, that are obtained from (\ref{eq:ws}) and
(\ref{eq:AF}) in the respective cases, we obtain $w(\calL)>0$, a
contradiction.

If $F\in\calR$ then we know that $F$ also contains old items
of size at least 3 if $\Rbar\not\in\calA$ or even 4 if $\Rbar\in\calA$
(and thus $F\neq\Rbar$). Summing this with the respective bound from
(\ref{eq:AF}) we obtain $s(\calA)+s(F)>12|\calA|+12$. Summing this
with $s(\calH)\geq 12|\calH|$ from Lemma~\ref{l:huge}(\ref{i:h}) now
yields $s(\calL)>12|\calL|$, a contradiction.

Thus~(\ref{eq:AneqF}) always leads to a contradiction. 

We now distinguish subcases depending on $|\calA|$ and in each case we
either prove~(\ref{eq:AF}) or obtain a contradiction directly. 
Note that $\Rbar\in\calA$ whenever $|\calA|\geq 4$. 

\mycasesp{Case $|\calA|=2$:} The two bins together with $n$ and
$n'$ have size more than $36$. Thus
$s(\calA)+s(n)+s(n')>36=12\cdot2+12$, which implies~(\ref{eq:AF}). 

\mycasesp{Case $|\calA|=3$:} We have $s(C)>9$ and the remaining two
bins together with $n$ and $n'$ have size more than $36$. Thus
$s(\calA)+s(n)+s(n')>12|\calA|+9$, which implies~(\ref{eq:AF}) in all
cases except if $F=E_1$ and $\Rbar\not\in\calA$.  

In the remaining case, $\calA=\{L,M,T\}$ and $C=L$, as $\calA$
contains no bin from $\calR$ and $|\calA|=3$.  We prove a
contradiction directly. Let $o$ be the size of old items in $T$.  We
apply Lemma~\ref{l:huge}(\ref{i:hr}), using the fact that $o+\rho>6$
by Lemma~\ref{l:1}(\ref{i1:tiny}), where $\rho$ is the total size of
old items in $\Rbar\in\calH$, and $\hmodfour=3$. We get
$o+s(\calH)\geq o+12|\calH|+\hmodfour+\rho-4>12|\calH|+5$.  Let $n''$
be a new item in $T$. Since $n''$ does not fit into $M$,
$s(M)+s(n'')>18$; also $s(L)>9$ and $s(F)>18$. Summing all the bounds,
we have $s(\calL)\geq o+s(\calH)+s(M)+s(n'')+s(L)+s(F)
>12|\calH|+5+18+9+18=12|\calL|+2$, a contradiction.

\mycasesp{Case $|\calA|=4$:}
The last bin $R\in\calA$ is in
$\calR$. Together with any previous bin it has size more than $21$,
the remaining two bins together with $n$ and $n'$ have size more than
$36$ by Lemma~\ref{l:aux}(\ref{i:12}). 
Thus $s(\calA)+s(n)+s(n')>21+36=4\cdot12+9$ which implies~(\ref{eq:AF}),
since $\Rbar\in\calA$.

\mycasesp{Case $|\calA|=5$:}
First consider the case $F=E_1$. 
The last two bins of $\calA$ are in $\calR$, we pair
them with two previous bins to form pairs of size more than
The remaining bin has size at least $9$,
since $n$ does not fit into it and $s(n)<9$. We also have
$s(F)>18$. Thus $s(\calA)+s(A)>21+22+9+18=5\cdot12+10$, which
implies~(\ref{eq:AF}).

If $F\in\calR$ then one of the last two bins of $\calA$ has size more
than $11$ and the other forms a pair of size more than $21$ with one
special bin. The remaining two bins together with $n$ and $n'$ have
size more than 36 by Lemma~\ref{l:aux}(\ref{i:12}). Thus
$s(\calA)+s(n)+s(n')>11+21+36=5\cdot12+8$ which implies~(\ref{eq:AF}),
because $\Rbar\in\calA$.

\mycasesp{Case $|\calA|=6$:} Then $\calA$ contains all three special
bins and three bins from $\calR$, therefore also $F=E_1$. We form
three pairs of a special bin with a bin from $\calR$ of total size
more than $21+22+22$. Since $s(F)>18$, we have
$s(\calA)+s(F)>21+22+22+18=6\cdot12+11$.  Since in this case
$A=F=E_1$, we have $s(\calH)\geq 12|\calH|+3$ and
$s(\calL)>12|\calL|$, a contradiction.

In all of the cases we can derive a contradiction, which implies
that our algorithm cannot fail. This concludes the proof of
Theorem~\ref{thm:onepointfive}. \qed

\section{Conclusions}

We note that the analysis of our algorithm is tight, i.e., if we
reduce the capacity of the bins below 18, the algorithm
fails. Consider the following instance. Send two items of size 6
which are in the first phase packed separately into two bins. Then
send $m-1$ items of size 12. One of them must be put into a bin
with an item of size 6, i.e., one bin receives items of size 18, while
all the items can be packed into $m$ bins of size 12.

To decrease the upper bound below $1.5$ seems challenging. In
particular, the instance above and its modifications with more items
of size 6 or slightly smaller items at the beginning shows that these
items need to be packed in pairs. This in turns creates difficulties
that, in the current approach, lead to new item and bin types; at this
point we do not know if such an approach is feasible.

Another possible research direction is to improve the lower bounds, in
particular for large $m$. It is quite surprising that there are no
lower bounds for $m>5$ larger than the easy bound of $4/3$.

\smallskip

\noindent{\bf Acknowledgment.} The authors thank Emese Bittner for
useful discussions during her visit to Charles University. We also
thank to referees for many useful comments.

{\small

}
\end{document}